\numberwithin{equation}{section}
\newtheorem{thm}{Theorem}[section]
\newtheorem{lemma}[thm]{Lemma}
\newtheorem{claim}{Claim}
\newtheorem{remark}{Remark}
\newtheorem{definition}{Definition}
\newcommand{\p}{\partial}
\newcommand{\be}{\begin{equation}}
\newcommand{\ee}{\end{equation}}
\newcommand{\la}{\label}
\begin{document}

\title[Topological constraints in geometric deformation quantization]{Topological constraints in geometric deformation quantization on domains with multiple boundary components}

%

%
%
\author[Razvan Teodorescu]{Razvan Teodorescu}
\email{razvan@usf.edu}
\address{4204 E. Fowler Ave., CMC 314, Tampa, FL 33620}


\begin{abstract}
A topological constraint on the possible values of the universal quantization parameter is revealed in the case of geometric quantization on (boundary) curves diffeomorphic to $S^1$, analytically extended on a bounded domain in $\mathbb{C}$, with $n \ge 2$ boundary components. Unlike the case of one boundary component (such as the canonical Berezin quantization of the Poincar\'e upper-half plane or the case of conformally-invariant 2D systems), the more general case considered here leads to a strictly positive minimum value for the quantization parameter, which depends on the geometrical data of the domain (specifically, the total area and total perimeter in the smooth case).  It is proven that if the lower bound is attained, then $n=2$ and the domain must be annular, with a direct interpretation in terms of the global monodromy. 
\end{abstract}

\maketitle

\section{Introduction}

Deformation quantization of low-dimensional physical systems characterized by non-trivial topological and geometric data is a problem found at the core of several open question of interest for current research, especially those related to Fractional Quantum Hall systems, non-equilibrium quantum dynamics of many-body states in cold Fermi gases, or strongly-interacting Fermionic systems in 1+1 dimensions, to name only a few. The problem investigated in this article can be summarized as follows: given a bounded domain $\Omega \subset \mathbb{C}$, of total area $A$ (or the corresponding Minkowski content with respect to Lebesgue measure, in a more general case), and $n \in \mathbb{N}, n \ge 2$ boundary components 
$\Gamma_k, k = 1, 2, \ldots, n$, with total perimeter $P = \sum_{k=1}^n L_k$ (where $L_k$ is the perimeter of $\Gamma_k$, or corresponding Minkowski content relative to the arclength Lebesgue measure), are there any constraints on the possible values of the deformation quantization parameter $\lambda$, imposed by the topological and geometrical data of the domain? 
 
It is important to stress from the beginning that, aside from the two-dimensional set-up (and corresponding assumptions on the Minkowski content of the domain), the question is addressed at the highest level of generality; specifically, the only non-trivial assumption made throughout this study is that the quantization parameter $\lambda$ is the same for each boundary component (and indeed for the whole domain $\Omega$), in the sense that we do not allow for various ``boundary" restrictions of the theory (corresponding to -- in the case of smooth boundaries diffeomorphic to the unit circle $S^1$ -- various realizations of the Virasoro algebra deformations of orientation-preserving diffeomorphisms of $S^1$, i.e. of the central charge associated with that particular boundary component). As it turns out, this minimal assumption is sufficiently strong to ensure rigidity of the following two characteristics of the theory:

\begin{itemize}
\item[1)] the possible values allowed for $\lambda$, for given topological data (connectivity, as controlled by $n \ge 2$) and geometric data: specifically, we obtain a strictly positive lower bound, 
$\lambda \ge \lambda_{m} = 2A/P > 0$;
\item[2)] the complex geometry associated with the domain $\Omega$ (i.e. the intrinsic 2D metric of $\Omega$ and its boundary components $\{ \Gamma_k\}$), which becomes completely determined by imposing that the lower bound is achieved; moreover, this identifies the pseudo-Riemannian trajectories on $\overline{\Omega}$ to the family of positive trajectories of a quadratic differential induced by the requirement $\lambda = \lambda_{m}$.
\end{itemize}
 
The relevance of the results for low-dimensional quantum theory stems from revealing the connections  between topology (specifically, homological data), complex geometric structure (selection of a specific element from the Teichm\"{u}ller space of the domain), and finally the parameter of deformation quantization, $\lambda$, which ultimately enters all aspects of the associated quantum field theory. Another non-trivial aspect lies in the introduction of a geometric characteristic (``length scale") $\lambda_m$, controlled by the geometric data of $\overline{\Omega}$, compatible with the boundary quantization of each component of $\p \Omega$, which for a trivial topological structure of $\Omega$ (i.e., simply-connected) would normally correspond to the 1-cocycle of projective representations of Diff$(S^1)$, whose values may be any real number (no ``length scale"). 
 
 This paper is structured as follows: the main results are presented in detail in Section 2, where we also briefly review the general theory behind deformation quantization, structure of the Teichm\"{u}ller space and choice of complex structure of the domain, and the relationships between pseudo-Riemannian metric and quadratic differentials. Section 3 contains the proof of the main theorem which establishes that the lower bound of the quantization parameter is achieved only for domains with connectivity $n = 2$, and that the domain can only be an annulus in this case. Remarks regarding the applicability of these results to open problems in low-dimensional quantum theories are provided in Section 4; the remaining proofs are collected in Appendix 1. 
 
\section{Main results} \la{main}

\begin{figure}[h!!] \begin{center}
\includegraphics*[width=11cm]{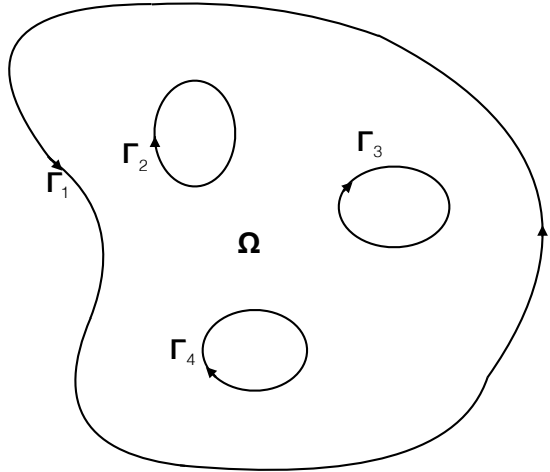}
\caption{The domain $\Omega$ and its boundary components, shown with their orientations relative to $\Omega$ (clockwise for the interior contours, counterclockwise for the exterior one).}
\label{fig1}
\end{center}
\end{figure}

Let $\Omega \subset \mathbb{C}$ be a bounded domain, with boundary components $\Gamma_k, k = 1, 2, \ldots, n$, $n \ge 2$, and denote by $A$ the area of $\Omega$ and by $L_k$ the total length of the curve $\Gamma_k$ (in the more general case we replace the area and arclength measures of $\Omega$ and $\p \Omega$ by their respective Minkowski contents), as illustrated in Figure~\ref{fig1}.  

If $\tau_k(z)$ represents the unit tangent vector at $z \in \Gamma_k$, according to the canonical counter-clockwise orientation on $\Gamma_k$,  denote by $\{ \Omega_k \}_{k=1}^n$ the domains defined by
\be
\Omega_k \cap \Omega = \emptyset,  \quad \partial \Omega_k = \Gamma_k,
\ee
and choose $\Omega_1$ for the one which is unbounded.  Therefore, along $\Gamma_1$ the inward-oriented (with respect to $\Omega$) unit normal vector is given by
\be
n_1(z) = i \tau_1(z),
\ee
while on all the ``interior" contours $\Gamma_k (k = 2, 3, \ldots, n)$ we have the usual defining relation for the inward-oriented (with respect to $\Omega$) unit normal:
\be
n_k(z) =- i \tau_k(z), \quad k = 2, 3, \ldots, n.
\ee

A convenient compact notation is obtained by introducing the parameter $\alpha$ as follows:
$$
\alpha = 1 \mbox{ for } k \ge 2, \quad \alpha = -1 \mbox{ for } k = 1,
$$
so that $n_k(z) =- i \alpha \tau_k(z), k = 1, 2, \ldots, n$. Let $A(\Omega)$ denote  the algebra of analytic functions on $\Omega$.

\subsection{Quasiconformal transformations and deformation quantization for a single boundary curve}

In the case of a single boundary component, diffeomorphic to the unit circle $S^1$, canonical geometric quantization is equivalent \cite{Zograf, Iwaniec} to the unitary representations of the central extension of the group ${\rm{Diff}} (S^1)$ by the group of real numbers $\mathbb R$ (the Virasoro-Bott group), where ${\rm{Diff}} (S^1)$ represents the group of orientation preserving $C^{\infty}$ diffeomorphisms of the unit circle $S^1$, with the group operation given by composition of diffeomorphisms. Denoting by  
${\rm{Vect }} (S^1)=\{\phi=\phi(\theta)\frac{d}{d\theta}\mid \phi\in C^{\infty}(S^1\to\mathbb R)\}$ the space of smooth real vector fields on the circle, equipped with the Lie brackets $[\phi_1(\theta)\frac{d}{d\theta},\phi_2(\theta)\frac{d}{d\theta}]$, this leads in turn to the non-trivial extension of ${\rm{Vect} } (S^1)$ by the algebra of real numbers. The central extension is unique up to isomorphisms and is given by the Gelfand-Fuchs cocycle
\be
\omega(\phi_1,\phi_2)=\oint_{S^1}\phi_1'(\theta)\phi_2''(\theta)d\theta.
\ee

Identifying the unit disk $\mathbb{D}$ with the upper hemisphere of the Riemann sphere, the universal Teichm\"{u}ller space is the  space of real analytic quasiconformal mappings of $\mathbb{D}$ into itself \cite{Iwaniec}, defined up to a M\"obius transformation for the boundary restrictions. Its elements form an open subspace of the bounded holomorphic functions on $\mathbb{D}$ with the $C^{\infty}$ norm. For a general compact Riemann surface $\Sigma$ of finite genus, the Teichm\"{u}ller space  is the  subspace of the universal Teichmüller space invariant under the fundamental group $\pi(\Sigma)$.  It is known \cite{Bers} that the  Teichm\"{u}ller space  of $\Sigma$  is equivalent to is an open subspace of the complex vector space of quadratic differentials on $\Sigma$, invariant under $\pi(\Sigma)$. Owing to the tensor covariance of quadratic differentials,  this means that the adjoint action on the dual of ${\rm{Diff}} (S^1)$ contains operators of the form 
$$
\mathcal{L}_f \equiv \frac{d^2 \,\,}{d \theta^2} + \frac{1}{2}\mathcal{S}(f)(\theta),  
$$
where $f \in {\rm{Diff}} (S^1)$ and $\mathcal{S}$ represents the Schwarzian derivative. The invariance of the Schwarzian under composition by M\"obius  transformations (whose Schwarzian derivative is identically zero) provides a bijective map between the operators $\mathcal{L}_f$ and the elements of the  Teichm\"{u}ller space, such the null element corresponds to $\mathcal{L}_0$ (so it can be interpreted as  the quantization of the simple harmonic oscillator). We will be using this correspondence in the main derivations of this paper.

As shown in \cite{Radulescu}, the (Murray - von Neumann) algebraic deformation quantization of the space $\mathbb{H}/\pi(\Sigma)$ (or equivalently $\mathbb{D}/\pi(\Sigma)$, with the same hyperbolic geometry), also known as Berezin quantization,  leads to a continuous family of von Neumann algebras, indexed by a real parameter $t \in \mathbb{R}_+$ (quantization parameter), which corresponds to the unrestricted real central extension of the theory in the geometric deformation approach indicated earlier. The algebraic deformation construction  also produces a  Hochschild 2-cocyle \cite{Radulescu}, to be compared to the  Gelfand-Fuchs cocycle mentioned above.

Therefore, in the case of quasiconformal transformations of a Riemann surface of finite genus, defined up to M\"obius transformations on a single boundary component, deformation quantization has no intrinsic ``length scale", and no associated ``finite gap", a feature which disappears when considering multiple boundary components. 

\subsection{Geometric quantization for bounded domains with multiple boundary components}

To formulate the problem in the simplest terms, we are seeking a set of $n$ vector-valued functions (vacuum states) $\Psi_k(z) = (v_k(z), w_k(z)):\overline{\Omega} \to \mathbb{C}^2, k = 1, 2, \ldots, n$, and an analytic function $\varphi \in A(\Omega)$, such that the following conditions are satisfied: 

\begin{itemize}
\item[1)] (Analyticity) The connection form $A_{\varphi} dz$ and the corresponding covariant derivative 
$$
\nabla_{\varphi} = \mathbb{I} \frac{d \,\,}{dz} + \frac{1}{\lambda}A_{\varphi},  
$$
(where $\mathbb{I}$ is the $2 \times 2$ unit matrix), with  
\be \la{1}
A_{\varphi}(z) \equiv
\left (
\begin{array}{cr}
 0 & -1 \\
{\varphi'(z)} & 0 \\
\end{array}
\right ),
\ee
are well-defined on $\Omega$; 
\item[2)]  (Vacuum vectors) For every boundary component $\Gamma_k, k = 1, 2, \ldots, n$:
\be \la{2}
\nabla_{\varphi}  \Psi_k(z) = 0, \quad z \in \Omega.
\ee
\item[3)] (Boundary symplectic forms) There exist linear transformations \\ $U_k(z): \overline{\Omega} \to GL(2, \mathbb{C})$, whose boundary values are unitary, \\ $U_k(z)\Big |_{\Gamma_k} \equiv U_k(s) \in U(2)$, such that, on the sections $\Psi_k(z(s))$: 
\be \la{3}
\left [ \mathbb{I} \frac{d \,\,}{ds} + \frac{1}{\lambda} 
\left (
\begin{array}{rr}
 0 & -1 \\
1 & 0 \\
\end{array}
\right )
\right ] (U_k \Psi_k)(s) = 0,
\ee
where $s$ denotes the local arclength parameter on $\Gamma_k$.
\end{itemize}


\subsection{Discussion of the choice of gauge field and boundary conditions}

\subsubsection{The analyticity condition}

As mentioned already, this minimal model is chosen such that for $n = 1$ it can be reduced to the known problem of Teichm\"uller space for a Riemann surface of genus 0 \cite{Zograf}, for which the boundary $\Gamma$ is the intersection of the upper and lower hemispheres (and can be chosen $\Gamma = \mathbb{T} = S^1$ without loss of generality). In that sense, the present model calls for a simultaneous quantization on $n \ge 2$ boundaries; it can be regarded as the generalization corresponding to an algebraic variety (with $n$ connected components) instead of a Riemann surface. 

In the simplest implementation, the single-boundary model would correspond to the simple harmonic oscillator, for which the covariant derivative will take the form given in  \eqref{3} and the vacuum representation $U(1) \oplus U(1)$ is given by 
\be \la{local}
v(s) = e^{isE_0/\lambda},\quad w(s) = i v(s), \quad s \in [0, 2\pi),
\ee
and $\lambda = E_0 = 1$, setting the vacuum energy to $E_0=1$, as we indicate in more detail later in this section. Therefore, the connection form reduces to the symplectic form in two dimensions \eqref{3}, a condition imposed in the general case as well.  
 

The choice of connection form $A_{\varphi}$ reflects the requirement that the quantization on any boundary component  $\Gamma_k$ admits an analytic continuation in $\Omega$ to any other  boundary component, and moreover that the condition 
\be \la{problem}
v_k'' + \frac{\varphi'}{\lambda^2} v_k = 0,
\ee
is the same for all $v_k, k = 1, 2, \ldots, n.$ Then let us also note that $\varphi'$ is proportional to the Schwarzian of the ratio of any pair  $v_{i},v_{j}, i \ne j$ \cite[Ch. 6]{LG}:
$$
\varphi' = \frac{\lambda^2}{2} \mathcal{S}\left ( \frac{v_i}{v_j}
\right ) ,
$$
where
$$
\mathcal{S}(f) \equiv \left (\log f' \right )'' - \frac{1}{2}[(\log f')']^2.
$$
As a Schwarzian, $\varphi ' dz^2$ is a quadratic differential \cite{QD2}--\cite{QD4}, which means that the solution to the problem \eqref{problem} provides at once the compatible quantization for each boundary component $\Gamma_k$, and a unique element of the universal Teichm\"{u}ler space, corresponding to the quadratic differential $\varphi'(z)dz^2$. In other words, solving for the unique quantization on multiple boundaries leads to a specific complex structure on $\Omega$, corresponding to the quadratic differential of density $\varphi'(z)$. As we will see in the next section, this complex structure induces (pseudo-)Riemannian metrics on all boundary components $\Gamma_k$. For now, we only make the remark that, for $\varphi(z) \in A(\Omega)$, the 2-form 
$
d\varphi \wedge dz = \overline{\p} \varphi d \bar z \wedge dz 
$ 
vanishes identically, while the quadratic differential is the corresponding (non-vanishing) symmetric tensor product. Relaxing the conditions on $\varphi$ and allowing for simple poles $z_j \in \Omega$ of real residues $c_j \in \mathbb{R}$, the 2-form 
$
d\varphi \wedge dz =  2\pi \sum_{j} c_j \delta(z-z_j) dx\wedge dy
$
would correspond to a pseudo-Riemannian metric with the curvature concentrated at the poles $z = z_j$. 
Finally, regarding the holomorphic covariant derivative $\nabla_{\varphi}$, a simple calculation yields for its curvature form the expression 
$$
\mathcal{R} = [\nabla_{\varphi}, \overline{\nabla}_{\bar \varphi}] dz \wedge d \bar z = 
\Im \left (
\begin{array}{rr}
 \frac{ \varphi'}{\lambda^2} & 0 \\
 \frac{1}{4\lambda} \Delta \varphi & - \frac{\varphi'}{\lambda^2} \\
\end{array}
\right ) dx \wedge dy, 
$$
where $\Delta$ denotes the usual Laplace operator in $\mathbb{R}^2$. Therefore, in the case of analytic connections $\varphi \in A(\Omega)$ (but also for the cases where $\varphi$ is allowed to have isolated logarithmic and pole singularities as indicated above) $\Delta \Im \phi = 0$, so there is only one nontrivial term controlling the connection curvature, namely $\varphi '(z)$. 

\subsubsection{Vacuum states} As stated, the vacuum states (for each boundary component) are chosen in the lowest representation corresponding to a connection proportional to the symplectic form in two dimensions \eqref{2}. Moreover, they are annihilated by the derivative $\nabla_{\varphi}$, and are related to the local canonical vacua on the respective boundary  \eqref{local} by unitary transformations $U_k \in U(2)$. While not explicitly stated amongst the conditions \eqref{1}-\eqref{3}, the vacua $\Psi_k$ are cyclic vectors in their respective Hilbert spaces of states, and therefore cannot vanish anywhere on $\Gamma_k$. 

We recall \cite{Connes} that, in order to express explicitly the cyclic property of vacuum states,  it is necessary to know the specific functional space they generate, i.e. the Hilbert space of states $H$ and the norm used on $H, || \,.\,  ||_{H}$; some known examples are provided in Section 4. Here we only remark that, while the full might of functional models is indeed needed to compute estimates for all the observables of the theory, it is fortuitously not required for the purpose of this investigation.  

\subsubsection{Reduction to canonical symplectic form and monodromy} 

The justification of condition \eqref{3} follows directly from the simplest model at $n=1$, i.e. the simple harmonic oscillator. In normalized dimensionless form, the classical limit of the theory corresponds to the phase-space representation $(p, q) \in S^1$, or Hamilton equations written in the form \eqref{3}, with arclength parameter $s$ playing the role of the time coordinate, for periodic trajectories with period $2\pi$. 

The $U(1)$ representation of the geometric quantization is equivalent with the monodromy condition exp$[\frac{i}{\lambda}\oint p dq] = 1$, or else $E_n - E_0 = n \lambda$, where $E_0$ is the ground-state (vacuum vector) energy. For the $U(1) \oplus U(1)$ representation, the monodromy condition can be applied to the vacuum state directly, since the ground-state energy then becomes $E_0 = 1$ in normalized units. We note that global monodromy for $n \ge 2$ is much weaker than for $n = 1$ and therefore insufficient to provide the solution. The generalized monodromy condition for $n \ge 2$ is:
\be \la{mon}
\sum_{k=1}^n \oint_{\Gamma_k} d Arg(v_k) = 0 \mod  2\pi  \mathbb{Z}. 
\ee

\subsection{Summary of the main results}

The two main results of this analysis are:

\begin{thm}[Topological bound] \la{th1}
For any quantization of a bounded domain with $n \ge 2$ boundary components, area $A$ and perimeter $P$, the quantization parameter 
$$
\lambda  \ge  \lambda_m  = 2A/P = \inf_{\varphi \in A(\Omega)} ||\bar z - \varphi(z) ||,
$$ 
where the norm on the RHS is the uniform norm for functions continuous on $\Omega$.  
\end{thm}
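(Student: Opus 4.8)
The plan is to split the theorem into two independent estimates: a \emph{quantization} inequality $\lambda \ge \inf_{\varphi\in A(\Omega)}\|\bar z-\varphi\|$, extracted from the boundary conditions (1)--(3), and a purely geometric \emph{analytic-content} inequality $\inf_{\varphi\in A(\Omega)}\|\bar z-\varphi\| \ge 2A/P$, valid for every bounded domain. Chaining them yields $\lambda \ge 2A/P = \lambda_m$, and the displayed identity $\lambda_m = \inf_\varphi\|\bar z-\varphi\|$ records that this bound is sharp; I would isolate its equality conditions at the end, since they drive the annulus characterization of the subsequent theorem.

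The heart of the argument is the geometric inequality, which I would prove by Green's/Cauchy's theorem. Since $\varphi$ is analytic on $\Omega$ and continuous up to $\partial\Omega$, Cauchy's theorem gives $\oint_{\partial\Omega}\varphi\,dz = 0$, where $\partial\Omega$ carries the orientation fixed in Figure~\ref{fig1} (encoded by $\alpha$: counterclockwise on the outer contour $\Gamma_1$, clockwise on the interior ones). The area formula gives $\oint_{\partial\Omega}\bar z\,dz = 2iA$, so for every $\varphi\in A(\Omega)$
\[
2A = \Big|\oint_{\partial\Omega}(\bar z-\varphi)\,dz\Big| \le \oint_{\partial\Omega}|\bar z-\varphi|\,|dz| \le \|\bar z-\varphi\|\,\oint_{\partial\Omega}|dz| = \|\bar z-\varphi\|\,P,
\]
using $\oint_{\partial\Omega}|dz| = \sum_k L_k = P$; taking the infimum over $\varphi$ gives $\inf_\varphi\|\bar z-\varphi\| \ge 2A/P$. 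That the supremum of $|\bar z-\varphi|$ over $\overline\Omega$ is attained on $\partial\Omega$ follows from the strict subharmonicity $\Delta|\bar z-\varphi|^2 = 4(1+|\varphi'|^2) > 0$, so the uniform norm on $\overline\Omega$ equals that on $\partial\Omega$.

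For the quantization inequality I would argue from condition (3). Requiring a boundary-unitary $U_k$ that carries the bulk vacuum $\Psi_k|_{\Gamma_k}$ to a solution of the canonical symplectic equation forces the modulus $|\Psi_k(s)|^2 = |v_k|^2 + \lambda^2|v_k'|^2$ to be constant along $\Gamma_k$ (unitaries preserve length and the canonical trajectory is a fixed-radius circle), while the rate of the symplectic rotation is pinned at $1/\lambda$. Translating this normalization through the connection $A_\varphi$ and the quadratic differential $\varphi'\,dz^2 = \tfrac{\lambda^2}{2}\mathcal{S}(v_i/v_j)\,dz^2$ identifies the length scale of the admissible frame with the uniform distance from $\bar z$ to the analytic function $\varphi$ solving (\ref{problem}): a valid quantization at parameter $\lambda$ produces $\varphi\in A(\Omega)$ with $\sup_{\partial\Omega}|\bar z-\varphi| = \lambda$, whence $\lambda \ge \inf_\varphi\|\bar z-\varphi\|$. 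I expect this step to be the main obstacle, as it is where the operator-theoretic data (vacua, the $U(2)$ boundary symmetries, the monodromy (\ref{mon})) must be compressed into the single scalar extremal quantity $\|\bar z-\varphi\|$, and where one must verify that the resulting $\varphi$ genuinely lies in $A(\Omega)$ rather than being merely defined on the boundary.

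Finally I would combine the two inequalities and inspect the extremal case. Equality $\lambda = 2A/P$ forces equality in both steps of the chain above: in the second inequality $|\bar z-\varphi|\equiv\lambda$ must be constant on $\partial\Omega$, and in the first $(\bar z-\varphi)\,dz$ must have constant argument, i.e. $\bar z-\varphi = \lambda e^{i\gamma}\,\overline{\tau_k}$ for a fixed $\gamma$ (up to the orientation sign on interior contours). These two conditions fix the Schwarz-function behaviour on each $\Gamma_k$ and, as developed in Section 3, are compatible only with $n=2$ and an annular $\Omega$; I would flag this here and defer the rigidity proof to the next theorem.
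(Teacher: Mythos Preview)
Your two-step decomposition is sound and, in fact, goes further than the paper itself does: the paper supplies \emph{no proof} of this theorem. In \S4 it states explicitly that Theorem~\ref{th1} ``is a consequence of a more general result which will be presented in full elsewhere, and is stated here without proof,'' namely the Claim that geometric and algebraic deformation quantization on compact spaces are equivalent. So there is no paper-proof to match against; what you outline is strictly more than the article offers.

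Your geometric step, the analytic-content inequality $\inf_{\varphi}\|\bar z-\varphi\|\ge 2A/P$, is correct and is exactly the standard Cauchy--Green argument (this is the content of the reference \cite{KB} that the paper invokes). Your quantization step has the right target but bypasses the actual mechanism. The paper's \S\ref{intr} records the concrete route: conditions \eqref{1}--\eqref{3} reduce on each $\Gamma_k$ to the Riccati equation $u_k^2+i\alpha\lambda u_k'=\varphi'$ for $u_k=\bar\tau_k=ds/dz$ (Eq.~\eqref{riccati}), whose first integral is precisely $\bar z+\lambda\bar n_k(z)=\varphi(z)$ on $\Gamma_k$ (Eqs.~\eqref{start}--\eqref{gm}). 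This yields $|\bar z-\varphi|\equiv\lambda$ on $\partial\Omega$ directly, hence $\lambda=\|\bar z-\varphi\|_\infty\ge\inf_\psi\|\bar z-\psi\|$, without passing through $|\Psi_k|^2$ or the Schwarzian. Your heuristic that the unitary $U_k$ forces constant modulus is not wrong, but the Riccati integration is what pins down the full relation $\bar z-\varphi=\lambda\bar n_k$ (including the phase), rather than merely $|\bar z-\varphi|=\mathrm{const}$; this matters because it is also what produces the equality analysis you sketch at the end.

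One caveat on the displayed identity $2A/P=\inf_\varphi\|\bar z-\varphi\|$: your chain only gives $\lambda\ge\inf_\varphi\|\bar z-\varphi\|\ge 2A/P$, and for a generic multiply-connected $\Omega$ the second inequality is strict (again \cite{KB}). The paper does not justify the equality either; in light of Theorem~\ref{th2}, which shows that attaining $\lambda=2A/P$ forces $n=2$ and annular geometry, that equality should be read as part of the extremal characterization rather than as a general fact about $\Omega$. Your proposal handles this correctly by separating the bound from the sharpness discussion.
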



\begin{thm}[Topological restriction] \la{th2}
Under the assumptions of Theorem~\ref{th1}, and additional conditions that each boundary component $\Gamma_k$ is a real-analytic simple curve, if the topological bound for the quantization parameter $\lambda_m$ is achieved, then:
\begin{itemize}
\item[1)] $n = 2$ and the corresponding domain $\Omega_m$ is an annulus 
$$
\Omega_m = \{ z \in \mathbb{C} | \,\, R_1 > |z| > R_2\}.
$$
\item[2)] The minimal value $\lambda_m = R_1-R_2$. 
\end{itemize}
\end{thm}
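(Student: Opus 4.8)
The plan is to mine the equality case of Theorem~\ref{th1} for a rigid boundary identity, recast it as a Schwarz-function equation, and then close the argument with a global winding/monodromy count. The lower bound in Theorem~\ref{th1} comes from $2A=\bigl|\oint_{\p\Omega}(\bar z-\varphi)\,dz\bigr|\le\oint_{\p\Omega}|\bar z-\varphi|\,|dz|\le\|\bar z-\varphi\|_\infty\,P$, using $\oint_{\p\Omega}\varphi\,dz=0$ for $\varphi\in A(\Omega)$. When $\lambda=\lambda_m$ both inequalities become equalities: the outer one forces $|\bar z-\varphi|\equiv\lambda_m$ on $\p\Omega$, and the inner one forces $(\bar z-\varphi)\,dz$ to have constant argument along the entire boundary. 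Writing $dz=T\,ds$ with $T$ the positively oriented unit tangent, these combine into the single relation $(\bar z-\varphi)\,T=c$, a fixed constant with $|c|=\lambda_m$; comparison with the vacuum equation \eqref{problem} at $\lambda=\lambda_m$ fixes the phase through $c^2=-\lambda_m^2$.

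Next I would localise this to each component through the Schwarz function $S_k$ of $\Gamma_k$, which exists and is single valued near $\Gamma_k$ precisely because $\Gamma_k$ is real-analytic. Substituting $\bar z=S_k(z)$ and $\overline T=\sqrt{S_k'(z)}$ into the boundary relation gives the Riccati identity $(S_k-\varphi)^2=c^2S_k'$; its standard linearisation $S_k-\varphi=-c^2v_k'/v_k$ returns exactly $v_k''+(\varphi'/\lambda_m^2)v_k=0$, so each boundary component is produced by a solution $v_k$ of \eqref{problem} that is an eigenfunction of the monodromy around $\Gamma_k$ (this is what makes $S_k$ single valued). Differentiating the same boundary relation along arclength yields $\varphi'\,T^2=1-\lambda_m\kappa$, with $\kappa$ the signed curvature of the positively oriented boundary; in particular $\varphi'\,T^2$ is real and, generically, nonzero, so every $\Gamma_k$ is a closed trajectory of the quadratic differential $\varphi'(z)\,dz^2$.

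The topology enters through two winding numbers. By the theorem of turning tangents (Gauss--Bonnet on the flat domain $\Omega$, whose Euler characteristic is $2-n$) one has $\mathrm{wind}(T,\p\Omega)=2-n$; hence $\bar z-\varphi=c\,\overline T$ gives $\mathrm{wind}(\bar z-\varphi,\p\Omega)=n-2$, while the reality of $\varphi'\,T^2$ gives $\mathrm{wind}(\varphi',\p\Omega)=-2\,\mathrm{wind}(T,\p\Omega)=2(n-2)$. Since $\varphi'$ is analytic in $\Omega$, the argument principle identifies the latter with the number of zeros of $\varphi'$ in $\Omega$, so $\varphi'$ has exactly $2(n-2)$ interior zeros. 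For an annulus ($n=2$) this count is zero, consistent with the explicit extremal $\varphi(z)=R_1R_2/z$.

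The decisive step is to show that these interior zeros cannot occur in the equality configuration, thereby forcing $n=2$. Here I would argue that the single global constant $c$ makes all boundary components simultaneous positive trajectories of $\varphi'\,dz^2$; any zero of $\varphi'$ would emit critical trajectories joining distinct components and thus destroy the relation $M_1M_2\cdots M_n=\mathbb{I}$ that the boundary monodromies must satisfy for the $n$ Schwarz functions to arise from the two-dimensional solution space of \eqref{problem}. With $\varphi'$ zero-free and $n=2$, the monodromy is generated by a single matrix whose two eigen-solutions $v_1,v_2$ yield the two boundary Schwarz functions; the equation is then of Euler type, which (after a translation) forces $\varphi(z)=R_1R_2/z$ and the two boundaries to be the concentric circles $|z|=R_1$ and $|z|=R_2$, i.e.\ the annulus $\Omega_m$. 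Part~2 is then the direct computation $\lambda_m=2A/P=\pi(R_1^2-R_2^2)/\bigl(\pi(R_1+R_2)\bigr)=R_1-R_2$. I expect this last rigidity step to be the principal obstacle: the winding identities only produce the \emph{number} $2(n-2)$ of zeros, and turning ``zeros are present'' into ``the configuration is geometrically impossible'' requires genuine control of the global trajectory structure of $\varphi'\,dz^2$ together with the interaction between the boundary monodromy eigen-solutions and the single phase constant $c$ --- exactly the point where the weakness of global monodromy for $n\ge2$, noted in Section~\ref{main}, has to be surmounted.
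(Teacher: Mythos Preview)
Your setup matches the paper's: the equality case yields the Schwarz-function Riccati identity (the paper's Eqs.~(3.2)--(3.6)) and the relation $\varphi'(z)\,dz^2=(1+\alpha\lambda\kappa)\,ds^2$ on each $\Gamma_k$ (Theorem~\ref{function}). From there, however, the two arguments diverge, and your sketch leaves three genuine gaps that the paper closes by quite different means.

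First, the claim that $\varphi'\,T^2$ is ``generically nonzero'' on $\p\Omega$ is exactly the hard point. Your winding identity $\mathrm{wind}(\varphi',\p\Omega)=2(n-2)$ is only meaningful once $\varphi'$ is known not to vanish on the boundary; otherwise $1+\alpha\lambda\kappa$ may change sign and the count is undefined. The paper devotes Lemma~\ref{n2} to this, using the Liouville--Green/JWKB asymptotics of solutions of $v''+(\varphi'/\lambda^2)v=0$ near a putative boundary zero to derive a contradiction with $|v_k|\equiv 1$ on $\Gamma_k$. Nothing in your outline substitutes for this analysis.

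Second, even granting the zero count $2(n-2)$, you acknowledge that excluding interior zeros is the ``principal obstacle'', and your heuristic about critical trajectories spoiling the monodromy product $M_1\cdots M_n=\mathbb{I}$ is not an argument. The paper does not try to rule out interior zeros directly; instead, once $\varphi'\ne 0$ on $\p\Omega$ is secured, every trajectory of $\varphi'\,dz^2$ extends to a closed curve in $\p\Omega$, so $\Omega$ is a \emph{maximal domain} in Jenkins' sense, and Jenkins' structure theorem forces connectivity $\le 2$ immediately. That single citation replaces your entire rigidity step.

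Third, for $n=2$ your reduction to an ``Euler type'' equation is not justified: knowing only that $\varphi'$ is zero-free and that the two boundary Schwarz functions come from monodromy eigen-solutions does not by itself pin down $\varphi$. The paper's route (Appendix~1) is substantially longer: write $\varphi'=C[(\log h)']^2$ with $h$ the conformal map to a round annulus, show the induced boundary correspondence $\mu=h^{-1}((R_1/R_2)h)$ is M\"obius because $\mathcal{S}(\mu)\equiv 0$, translate this into the curvature identities $(1-\lambda\kappa_1)\,ds_1^2=(1+\lambda\kappa_2)\,ds_2^2$ and $d\kappa_1/ds_2=d\kappa_2/ds_1$, and finally integrate a first-order ODE for $r_1(\phi_1)$ whose only closed-curve solution is a circle. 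Your one-line ``after a translation, $\varphi(z)=R_1R_2/z$'' hides all of this.

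In short: the winding-number bookkeeping is a pleasant reformulation once boundary non-vanishing is established, but it does not replace either Lemma~\ref{n2} or Jenkins' theorem, and the doubly-connected endgame still requires the full conformal/M\"obius analysis of the paper's Appendix~1.
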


In the next section we discuss in detail the proof and implications of the topological restriction Theorem~\ref{th2} above. Following that discussion, we return to the first result and discuss the foundation of the general lower bound given by Theorem~\ref{th1}.

\section{The theory at lower bound of deformation parameter: unique selection of complex structure and maximally-symmetric domain}

\subsection{Notations and defining equations} \la{intr}

Let $\Omega$ be the domain, with boundary components $\Gamma_k, k = 1, 2, \ldots, n$ and corresponding Schwarz functions $S_k(z)$, so that $S_k(z) = \bar z$ on $\Gamma_k$ and
\be
\sqrt{S'_k(z)} = \frac{ds}{dz} = u_k(z) = \frac{1}{\tau_k(z)}, \quad z \in \Gamma_k,
\ee
where the unit tangent vectors are defined using the convention established in \S \ref{main}. 

Conditions \eqref{1} -\eqref{3} lead by direct computations to the following Riccati equations for the functions $u_k(z)$:
\be \la{riccati}
u_k^2 + i\alpha \lambda u_k' = \varphi'(z), \, z \in \Omega.
\ee
It is known \cite{KB} that this problem is equivalent to 
\be \la{start}
S_k(z) + \lambda \bar n_k(z) = \varphi,
\ee
or equivalently
\be  \la{start1}
S_k(z) + i \lambda u_k(z) = \varphi, \quad k = 2, \ldots, n,
\ee
and
\be \la{start2}
S_1(z) - i \lambda u_1(z) = \varphi,
\ee
where $\lambda = 2A/P$, with $A$ the area of $\Omega$ and $P$ its perimeter, $P = \sum_k L_k$, and $L_k$ the perimeter of $\Gamma_k$.  Note that we can also write Eqs.~\eqref{start}-\eqref{start2} in the form
\be \la{gm}
\bar z + i\alpha \lambda \dot {\bar{z}}(s) = \varphi, \quad \dot{f} \equiv \frac{df}{ds},  \quad \alpha = 1 \mbox{ for } k \ge 2, \quad \alpha = -1 \mbox{ for } k = 1.
\ee
Moreover, the parameter $\lambda$ happens to be the lower bound for the distance from $\bar z$ to the space of analytic functions on $\Omega$, in the uniform norm \cite{KB}.

\subsection{Quadratic differentials and classification of extremal domains}


\begin{thm} \la{function}
In $\Omega$, $\varphi'(z)dz^2$ is a quadratic differential, real-valued on $\p \Omega$, and
\be
\varphi'(z) dz^2 = (1+ \alpha \lambda \kappa) ds^2
\ee
along each component $\Gamma_k$ of $\p \Omega$ (where $\alpha$ is defined as in Eq.~\eqref{gm}). 
Moreover, on every component $\Gamma_k$ of $\p \Omega$, 
$
\oint_{\Gamma_k} (1 + \alpha \lambda \kappa) ds > 0.
$
\end{thm}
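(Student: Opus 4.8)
The plan is to derive the pointwise identity directly from the Riccati equation \eqref{riccati}, $u_k^2 + i\alpha\lambda u_k' = \varphi'$, restricted to each boundary component, and then to treat the positivity of the period integral separately for the outer contour $\Gamma_1$ and the inner contours $\Gamma_k$, $k \ge 2$.

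That $\varphi'(z)\,dz^2$ is a quadratic differential is already in hand from \S\ref{main}: since $\varphi' = \tfrac{\lambda^2}{2}\,\mathcal{S}(v_i/v_j)$ is a multiple of a Schwarzian and $\varphi \in A(\Omega)$, the density $\varphi'$ is holomorphic and $\varphi'(z)\,dz^2$ transforms tensorially under holomorphic coordinate changes. It remains to evaluate this differential along each $\Gamma_k$ and to verify reality and positivity. For the pointwise identity I would parametrize $\Gamma_k$ by arclength, so that $dz = \tau_k\,ds$ and $dz^2 = \tau_k^2\,ds^2$, and combine $u_k = 1/\tau_k$ with the Frenet relation $d\tau_k/ds = i\kappa\,\tau_k$ (writing $\tau_k = e^{i\theta}$, $\kappa = d\theta/ds$). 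A short computation then gives $u_k^2\tau_k^2 = 1$ and $u_k' = du_k/dz = -i\kappa\,u_k^2$, so that multiplying \eqref{riccati} by $\tau_k^2$ and using $i\cdot(-i)=1$ yields $\varphi'\tau_k^2 = 1 + \alpha\lambda\kappa$, i.e. $\varphi'(z)\,dz^2 = (1+\alpha\lambda\kappa)\,ds^2$ on $\Gamma_k$. As $\kappa$, $\alpha$, $\lambda$ are all real, this simultaneously establishes that $\varphi'\,dz^2$ is real along $\partial\Omega$.

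For the period integrals I would integrate this identity and invoke the theorem of turning tangents: each $\Gamma_k$, oriented counterclockwise as in \S\ref{main}, has total curvature $\oint_{\Gamma_k}\kappa\,ds = 2\pi$, so that $\oint_{\Gamma_k}(1+\alpha\lambda\kappa)\,ds = L_k + 2\pi\alpha\lambda$. For the inner contours ($k \ge 2$, $\alpha = 1$) this is $L_k + 2\pi\lambda > 0$ with no further input. The real work is the outer contour $\Gamma_1$ ($\alpha = -1$), where the integral is $L_1 - 2\pi\lambda$ and positivity is not automatic; here I would use $\lambda = \lambda_m = 2A/P$ together with the isoperimetric inequality. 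Letting $A_1$ denote the area enclosed by $\Gamma_1$, one has $A_1 > A$ (the holes $\Omega_k$, $k\ge 2$, contribute positive area and $n \ge 2$) and $P > L_1$, so $L_1^2 \ge 4\pi A_1$ gives $L_1 P > L_1^2 \ge 4\pi A_1 > 4\pi A$, whence $L_1 > 4\pi A/P = 2\pi\lambda_m$ and therefore $L_1 - 2\pi\lambda_m > 0$.

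The main obstacle is thus confined to the outer boundary: the two local ingredients (the Riccati identity and the turning-tangent count) immediately dispose of the formula, the reality, and the positivity on every inner contour, whereas the positivity on $\Gamma_1$ genuinely requires the extremal value $\lambda = \lambda_m$ and the isoperimetric estimate. I would also be careful to fix the orientation and curvature-sign conventions once and for all—in particular the distinction between the boundary orientation of $\partial\Omega$ shown in Figure~\ref{fig1} and the counterclockwise orientation of each individual $\Gamma_k$ used to define $\tau_k$ and $\alpha$—since a sign slip there would interchange the easy inner-contour case with the delicate outer-contour case.
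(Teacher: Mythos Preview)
Your proposal is correct and follows essentially the same route as the paper: the paper derives the pointwise identity by differentiating the integrated relation \eqref{start}/\eqref{gm} with respect to arclength (rather than starting from the Riccati form \eqref{riccati} as you do, but these are equivalent computations), and then handles the period integrals exactly as you outline---turning-tangent count plus the isoperimetric inequality for $\Gamma_1$. Your treatment of the strict inequality on $\Gamma_1$ (using $n\ge 2$ to force $A_1>A$ and $P>L_1$) is in fact slightly sharper than the paper's chain of $\ge$'s.
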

\begin{proof}
From Eq.~\eqref{riccati} and $\alpha^2=1$, we notice that the functions
\be \la{linsol}
v_k(z) \equiv \exp \left [-\frac{i \alpha}{\lambda}  \int^z u_k(\zeta) d\zeta \right ], \quad k = 1, 2, \ldots, n,
\ee
solve the linear second-order differential equation associated to Eq.~\eqref{riccati}
\be \la{ODE}
v'' = -\frac{\varphi'}{\lambda^2} v,
\ee
so $\varphi'$ is proportional to the Schwarzian of the ratio of any pair  $v_{i},v_{j}$ \cite[Ch. 6]{LG}:
$$
\varphi' = \frac{\lambda^2}{2} \mathcal{S}\left ( \frac{v_i}{v_j}
\right ) ,
$$
where
$$
\mathcal{S}(f) \equiv \left (\log f' \right )'' - \frac{1}{2}[(\log f')']^2.
$$
As a Schwarzian, $\varphi ' dz^2$ is a quadratic differential \cite{QD2}-\cite{QD4}.
From Eq.~\eqref{start}
we obtain by differentiation with respect to arclength:
\be
\bar \tau(s) + \lambda \alpha \kappa \bar \tau(z(s)) = \varphi'(z)\cdot \tau(z(s)),
\ee
where (following the convention introduced in \S~\ref{intr}),
$$
\kappa \equiv -i \bar \tau \cdot \frac{d\tau}{ds}
$$
is the curvature of $\Gamma_k$ at that point. Multiplying both sides by $\tau(s)$, we get
\be \la{geom}
1 + \lambda \alpha \kappa = \varphi'(z) \tau^2(z),
\ee
or equivalently
\be \la{l0}
{\varphi'(z)}dz^2 = (1+\lambda \alpha \kappa) ds^2,  \quad z \in \Gamma_k.
\ee
%
%
%

For any interior contour $\Gamma_{k \ge 2}, \alpha = 1$, so we evaluate
\be
\oint_{\Gamma_k} (1+ \lambda \kappa)ds = L_k + 2\pi\lambda > 0,
\ee
hence the quadratic differential is strictly positive-definite on $\Gamma_k$. Since $\alpha = -1$ for $\Gamma_1$, compute
\be
\oint_{\Gamma_1} (1- \lambda \kappa)ds = L_1 - 2\pi\lambda = L_1 - \frac{4\pi A}{P},
\ee
with $A = \mbox{Area}(\Omega)$ and $P = L_1 + \sum_{k \ge 2}L_k$ its perimeter. Using $P \ge L_1$,
\be
L_1 - \frac{4\pi A}{P} \ge L_1 - \frac{4\pi\mbox{Area}(\Omega)}{L_1} \ge \frac{4\pi}{L_1}\left [ \mbox{Area}(\Omega_1^c) -
\mbox{Area}(\Omega) \right ] \ge 0,
\ee
where we have used the isoperimetric inequality for the complement of $\Omega_1$, $\Omega_1^c$, and the fact that
$\Omega \subseteq \Omega_1^c$. 
\end{proof}

\begin{definition} \la{nw1}
Let $\Sigma^{\pm}$ be the union of Stokes and anti-Stokes graphs of Eq.~\ref{ODE} in $\Omega$ \cite[Lemma~9.2-1]{Olver}, i.e. the union of arcs  $\{ \gamma^{\pm}_j\}$ satisfying
$$
\Im \int_{z_0}^z \sqrt{\varphi'(\zeta)} d\zeta = 0, \quad \zeta \in \gamma^{+}_j \subset \Sigma^{+}, \quad 
\Re \int_{z_0}^z \sqrt{\varphi'(\zeta)} d\zeta = 0, \quad \zeta \in \gamma^{-}_j \subset \Sigma^{-},
$$
where $z_0$ is any zero of $\varphi'(z)$ in $\overline{\Omega}$. 
\end{definition}

It is known \cite{Olver}-\cite{D} that if $\varphi'(z)$ is analytic in $\Omega$, then $\Sigma^{+}, \Sigma^{-}$ have the same number of arcs $\gamma^{\pm}_j$, they intersect only at zeros of $\varphi'(z)$, and each arc $\gamma^{\pm}_j$ is analytic, with one endpoint being a zero of $\varphi'$, and the other being either another zero, or a point on $\p \Omega$ (or possibly, both). Moreover, at a zero $z_0 \in \Omega$ of $\varphi'$ or order $m \ge 1$, there are exactly $m+2$ arcs from $\Sigma^{+}$ with local angle between adjacent arcs equal to $2\pi/(m+2)$, and another $m+2$ arcs from $\Sigma^{-}$, each of them bisecting the angle between two consecutive arcs of $\Sigma^{+}$.  


Let $z_0 \in \overline{\Omega}$ be a zero of order $m$ of $\varphi'$. By elementary calculations, it is easy to show that the general solution of Eq.~\eqref{linsol} has the local power series expansion
\be \la{entire}
v(z) = C_1 + C_2(z-z_0)+
C_3 (z-z_0)^{m+2}+ O((z-z_0)^{m+3})
\ee
with the coefficients determined by the local power expansion of $\varphi'(z)$, and $C_{1, 2, 3}$ all non-vanishing if $z_0 \in \p \Omega$. In the case where $\varphi'$ is a monomial, the solution \eqref{entire} is a linear combination of Bessel functions of orders $\pm \frac{1}{m+2}$, and more generally if $\varphi'$ is a polynomial, the solution is an entire function \cite[Ch. 7]{Olver}. However, the local solution is not convenient to use when exploring global properties of solutions such as $|v_k(z)|_{\Gamma_k} = $ constant, satisfied by Eq.~\eqref{linsol}. Instead, we are lead to the asymptotic series representations, valid outside a small neighborhood of $z_0$. We briefly review here the general theory. 

Defining the local coordinates $\zeta \equiv \epsilon(z-z_0)$, with $\epsilon$ a scale parameter, arbitrarily small but strictly positive, then c.f. \cite[Ch. 6]{LG}, \cite[Ch. 3]{Olver}, \cite{FF}, the general solution for Eq.~\eqref{ODE} admits the asymptotic series representation known as Liouville-Green (LG) or Jeffreys-Wentzell-Kramers-Brillouin (JWKB):
\be \la{stokes}
v(\zeta,\epsilon) = \frac{\sqrt{\lambda}}{(\varphi')^{1/4}} \left [ C_1 e^{\frac{i}{\lambda \epsilon}\int_0^\zeta  \sqrt{\varphi'} d\xi} + C_2 e^{-\frac{i}{\lambda \epsilon}\int_0^\zeta  \sqrt{\varphi'} d\xi} \right ][1 + O(\epsilon)],  
\ee
where $C_{1, 2}$ are constants, and $\zeta$ belongs to a domain $D$ having 0 as boundary point. In particular, for $z \in \Sigma^{+}$, the domain of validity includes a wedge domain of angle $2\pi/(m+2)$, with $\Sigma^{+}$ bisecting the angle.  The solution is approximated by the asymptotic expansion in the sense of the Borel-Ritt theorem \cite[Ch. 3]{Olver}, i.e. the R.H.S. of Eq.~\eqref{stokes} is an entire function of $\zeta$, smooth in both  $\zeta$ and $\epsilon$, and 
\be \la{conv}
\lim_{\epsilon \to 0}\frac{1}{\epsilon} \left [ v(\zeta, \epsilon) - \frac{\sqrt{\lambda}}{(\varphi')^{1/4}} \left (C_1 e^{\frac{i}{\lambda \epsilon}\int_0^\zeta  \sqrt{\varphi'} d\xi} + C_2 e^{-\frac{i}{\lambda \epsilon}\int_0^\zeta  \sqrt{\varphi'} d\xi} \right ) \right ] = 0, \, \zeta \in D.
\ee
The exact determination of the coefficients of the asymptotic series depends on the required asymptotics of the solution at $|\zeta| \to \infty$, and the precise choice which provides a single-valued solution with given asymptotic behavior must take into account the delicate balance of the first two terms in the R.H.S. of Eq.~\eqref{stokes}, especially when crossing any arc of $\Sigma^{+}$ \cite{D,FF}, where the dominant and sub-dominant series may change character (``Stokes phenomena").

\begin{lemma} \la{n2}
The function $\varphi'$ cannot vanish at any point on $\p \Omega$, so the quadratic differential $\varphi'(z) dz^2$ is strictly positive-definite on $\p \Omega$ (i.e. it is a Riemannian metric). 
\end{lemma}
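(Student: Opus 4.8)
The plan is to argue by contradiction: suppose $\varphi'$ has a zero $z_0 \in \Gamma_k$ of order $m \ge 1$, and extract a contradiction with the boundary data carried by the vacuum section $v_k$. First I would record the two facts the curve must satisfy at $z_0$. On one hand, relation \eqref{geom} shows that along $\Gamma_k$ the real-analytic function $\varphi'(z(s))\tau(s)^2 = 1 + \alpha\lambda\kappa(s)$ is real-valued with a zero of order exactly $m$ at $z_0$; in particular $\kappa(z_0) = -\alpha/\lambda$, and $\Gamma_k$ is locally a trajectory of $\varphi' dz^2$ --- horizontal where $1 + \alpha\lambda\kappa > 0$ and vertical where $1 + \alpha\lambda\kappa < 0$. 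On the other hand, integrating $u_k = 1/\tau_k$ along $\Gamma_k$ in \eqref{linsol} (equivalently, reading off the boundary condition \eqref{3}) gives the exact boundary form $v_k(z(s)) = v_k(z_0)\,e^{-i\alpha s/\lambda}$, so $v_k$ has constant modulus and constant phase velocity $-\alpha/\lambda$ along $\Gamma_k$, with $v_k(z_0) \ne 0$ since the vacuum is cyclic.

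Next I would rule out any vertical sub-arc abutting $z_0$. On such an arc of $\Sigma^{-}$ the integral $\int \sqrt{\varphi'}\,d\xi$ is purely imaginary, so the two Liouville--Green solutions in \eqref{stokes} degenerate into a dominant and a recessive real exponential; no nontrivial solution of \eqref{ODE} can then keep $|v_k|$ constant along a macroscopic portion of the arc, contradicting the previous step. Hence $1 + \alpha\lambda\kappa$ cannot change sign across $z_0$, which excludes odd $m$ and forces the boundary to remain a horizontal trajectory with $m$ even.

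The crux is to eliminate the remaining tangential (even-order) zero, where $\Gamma_k$ stays horizontal and $1 + \alpha\lambda\kappa$ merely touches $0$. Approaching $z_0$ along the horizontal arc, the Liouville--Green prefactor $(\varphi')^{-1/4}$ in \eqref{stokes} blows up like $|z - z_0|^{-m/4}$, while $\int_{z_0}^{z} \sqrt{\varphi'}$ vanishes like $|z-z_0|^{(m+2)/2}$; boundedness of the exact finite solution then forces the oscillatory bracket to cancel to leading order, and tracking the next order yields $|v_k| \sim |z-z_0|^{(m+4)/4} \to 0$, contradicting $|v_k| \equiv |v_k(z_0)| \ne 0$. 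This gives $m = 0$, i.e. $\varphi'(z_0) \ne 0$; since $1 + \alpha\lambda\kappa$ is then nowhere zero while $\oint_{\Gamma_k}(1+\alpha\lambda\kappa)\,ds > 0$ by Theorem~\ref{function}, it is strictly positive on every $\Gamma_k$, so $\varphi' dz^2$ is a Riemannian metric on $\partial\Omega$.

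I expect the main obstacle to be exactly this last step. The local power-series data \eqref{entire} and the boundary form $v_k(z(s)) = v_k(z_0)e^{-i\alpha s/\lambda}$ turn out to be mutually consistent to all orders once the forced value $\kappa(z_0) = -\alpha/\lambda$ is imposed (one checks this order by order using $z''(0) = i\kappa(0)\tau(0)$), so the contradiction cannot be purely local and must be read off from the asymptotic Liouville--Green regime. Making that rigorous requires controlling the exact solution inside the turning-point neighborhood, where \eqref{stokes} degenerates, via the inner Bessel-type solution \eqref{entire} and the Borel--Ritt matching \eqref{conv}, and handling the Stokes phenomenon across the prongs meeting at $z_0$ --- rather than treating $\lambda$ as a genuinely small parameter, which it is not.
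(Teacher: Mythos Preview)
Your overall strategy---assume a zero $z_0\in\Gamma_k$ of order $m$, use the exact boundary form $v_k(z(s))=v_k(z_0)e^{-i\alpha s/\lambda}$, and compare with the Liouville--Green representation \eqref{stokes}---matches the paper's. Two points of divergence are worth noting.

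First, the paper secures a horizontal sub-arc of $\Gamma_k$ more cheaply than you do: it does not argue via constant modulus on $\Sigma^{-}$, but simply observes that if \emph{both} arcs of $\Gamma_k$ meeting at $z_0$ lay in $\Sigma^{-}$ then $\Gamma_k\subset\Sigma^{-}$ globally, so $1+\alpha\lambda\kappa\le 0$ everywhere, contradicting the integral positivity $\oint_{\Gamma_k}(1+\alpha\lambda\kappa)\,ds>0$ from Theorem~\ref{function}. No LG input is needed here.

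Second, and more importantly, the crux step is handled by a different mechanism. You try to push the LG prefactor $(\varphi')^{-1/4}$ to blow up as $z\to z_0$ along the horizontal arc and read off $|v_k|\to 0$; as you rightly flag, this is exactly where the LG approximation breaks down, and making it rigorous would require a genuine inner/outer matching. The paper avoids the turning point entirely: it fixes a point $z$ on the horizontal arc at positive arclength $s>0$ from $z_0$, writes $\omega(s)=\int_{z_0}^{z}\sqrt{\varphi'}\,d\xi>0$, and exploits the Borel--Ritt limit \eqref{conv} along the sequence $\epsilon_n=(s+\omega(s))/(2\pi\lambda n)$. This forces the LG coefficients to satisfy $K_1+K_2=1$ with $K_j=\sqrt{\lambda}\,(\varphi'(z))^{-1/4}C_j$, and since $C_{1,2}$ are constants while $z$ was arbitrary on the arc, $(\varphi'(z))^{1/4}$ is constant there. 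Hence $|1+\alpha\lambda\kappa|$ is constant, the arc is circular, and a classification result from \cite{KB} then says $\Omega$ must be a disc or annulus with $\varphi'\neq 0$ on $\partial\Omega$, a contradiction.

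So your proposal is on the right track but has a genuine gap precisely where you suspected. The paper's route trades the delicate turning-point analysis for a rigidity argument (constant $\varphi'$ on an arc $\Rightarrow$ circular arc $\Rightarrow$ global classification), which is what lets the proof close.
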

\begin{proof}
Assume that $\varphi'(z_0) = 0, z_0 \in \Gamma_k \in \p \Omega$. Then from Definition~\ref{nw1} and Theorem~\ref{function}, $\Gamma_k \subset \Sigma^{+}\cup \Sigma^{-}$. The two arcs $\gamma_{1,2}(z_0)$ of $\Gamma_k$ meeting at $z_0$ are elements either of $\Sigma^{+}$ or of $\Sigma^{-}$. However, at least  one such arc must belong to $\Sigma^{+}$, because otherwise $\Gamma_k \subset \Sigma^{-}$, which implies that $\varphi' dz^2$ is negative-definite on $\Gamma_k$, so according to Eq.~\eqref{l0} $1 + \alpha \lambda \kappa \le 0$ everywhere on $\Gamma_k$, which contradicts Theorem~\ref{function}.

Take now $z$ on the arc belonging to $\Sigma^{+} \cap \Gamma_k$. We may assume that the arc has finite length, as  Remark~\ref{arc} shows.  According to the LG formula \eqref{stokes}, the solution \eqref{linsol} has the asymptotic expansion 
\be
v_k(z_0 + \epsilon \zeta) = 
\frac{\sqrt{\lambda}}{(\varphi')^{1/4}} \left [ C_1 e^{\frac{i}{\lambda \epsilon}\int_0^\zeta  \sqrt{\varphi'} d\xi} + C_2 e^{-\frac{i}{\lambda \epsilon}\int_0^\zeta  \sqrt{\varphi'} d\xi} \right ][1 + O(\epsilon)]
\ee
with $C_{1,2}$ constants. Denote by $\gamma = \Sigma^{+} \cap \Gamma_k \cap D$, and notice that along $\gamma$, condition \eqref{conv} and Eq.~\eqref{linsol} give
\be \la{kn}
\lim_{\epsilon \to 0}\frac{1}{\epsilon} \left [ e^{-i\frac{\alpha}{\lambda \epsilon}s(\zeta)} - \frac{\sqrt{\lambda}}{(\varphi')^{1/4}} \left (C_1 e^{\frac{i}{\lambda \epsilon}\int_0^\zeta  \sqrt{\varphi'} d\xi} + C_2 e^{-\frac{i}{\lambda \epsilon}\int_0^\zeta  \sqrt{\varphi'} d\xi} \right ) \right ] = 0.
\ee
Take $z \in \gamma$ so that the arclength along $\gamma$ from $z_0$ to $z$, is $s > 0$. Let $\omega(s) \equiv \int_{z_0}^z  \sqrt{\varphi'} d\xi$ and note that $\omega(s) > 0$ from Thm.~\ref{function}.  Also, let $K_{1, 2} \equiv 
\frac{\sqrt{\lambda}}{(\varphi'(z))^{1/4}} C_{1,2}$ and consider first the case of an interior boundary component $\Gamma_k$, i.e. $\alpha = 1$. Condition \eqref{kn} implies then 
\be \la{yo}
\lim_{\epsilon \to 0}\left |1  - K_1 e^{\frac{i}{\lambda \epsilon}(s+\omega(s))} - K_2 e^{\frac{i}{\lambda \epsilon}(s-\omega(s))}\right | = 0.
\ee
Taking now the sequence $\epsilon_n \equiv \frac{s + \omega(s)}{2 \pi \lambda n}, n \in \mathbb{N}$, we obtain 
\be
\lim_{n \to 0}\left |1  - K_1 - K_2 q^n\right | = 0, \quad 
q = e^{2\pi i\frac{s-\omega(s)}{s + \omega(s)}} \in \mathbb{T}.
\ee
This is possible either if $K_1 = 1, K_2 = 0$ for arbitrary $q$, or if $K_1 = 0, K_2 = 1, q=1$. 

Since the point $z \in \gamma$ was arbitrary,  $K_1+K_2 = 1$, so $|\varphi'(z)| = |1 + \alpha \lambda \kappa| = $ const. along the arc $\gamma$. But then $\Gamma_k$ contains a circular arc $\gamma$. It is known \cite{KB} that in this case $\Omega$ is either a disc or an annulus and $\varphi'(z) \ne 0$ on $\p \Omega$, a contradiction. 

For the case of the exterior boundary $\alpha = -1$, we exhange $K_1$ and $K_2$ in Eq.~\eqref{yo} and the argument follows identically. 
\end{proof}

\begin{remark} \la{arc}
If the arc of boundary (and of the Stokes graph) connecting two consecutive zeros of $\varphi'(z)$ does not have finite length, then by analyticity of $\Sigma^{+}$ and $\p \Omega$, there is an arc of boundary where $\varphi'(z)$ vanishes identically, and therefore is a circular arc, which is a contradiction as explained in Lemma~\ref{n2}. 
\end{remark}

\begin{thm}
The domain $\Omega$ is a {\emph{maximal domain}} in the sense of \cite{Jenkins}, so its connectivity (and the total number of boundary components of $\p \Omega$) is 1 or 2.
\end{thm}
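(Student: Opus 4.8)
The plan is to read the connectivity of $\Omega$ off the global horizontal trajectory structure of the quadratic differential $\varphi'(z)\,dz^2$ and then appeal to Jenkins' classification of maximal trajectory domains \cite{Jenkins}. The input already available is decisive: by Theorem~\ref{function} together with Lemma~\ref{n2}, $\varphi'(z)\,dz^2$ is real and strictly positive along each $\Gamma_k$ and $\varphi'$ has no zero on $\p\Omega$, so every boundary component is a \emph{regular closed horizontal trajectory}. First I would invoke the local structure near such a trajectory: a regular closed trajectory lies in a one-parameter family of freely homotopic closed trajectories, so on its $\Omega$-side each $\Gamma_k$ bounds a maximal ring domain $R_k\subset\Omega$ swept out by closed trajectories homotopic to $\Gamma_k$, whose other frontier is either another $\Gamma_j$ or a critical graph issuing from an interior zero of $\varphi'$.

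Next I would organise the global picture through Jenkins' Basic Structure Theorem: away from the critical graph $\Sigma$ (the zeros of $\varphi'$ and the critical trajectories emanating from them, as in Definition~\ref{nw1}), $\Omega$ is a finite union of maximal domains, each a ring, circle, strip, or end domain. Because $\varphi'$ is holomorphic in $\Omega$, the trajectories never run into a pole; this eliminates strip and end domains outright, and a circle domain would require a second-order pole at its centre, also excluded in the interior. Hence every maximal domain of $\Omega$ is a ring domain, and the assertion reduces to showing that $\Omega$ is a \emph{single} such domain, i.e. that the boundary ring domains $R_k$ exhaust $\Omega$ and no separating interior critical graph survives. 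The topological heart of this is that a region of connectivity $\ge 3$ cannot be foliated by mutually freely homotopic closed curves, since its boundary components are not all mutually homotopic.

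The quantitative step is a count of the interior zeros of $\varphi'$, which I would carry out on the Schwarz double. Since $\varphi'(z)\,dz^2$ is real along $\p\Omega$, the Schwarz functions $S_k$ continue it across each real-analytic boundary arc, and gluing $\Omega$ to its mirror copy produces a holomorphic quadratic differential $\hat\varphi$ on the closed surface $\hat\Omega$ of genus $g=n-1$. As $\deg(K_{\hat\Omega}^{\otimes 2})=4g-4$, the differential $\hat\varphi$ has exactly $4n-8$ zeros counted with multiplicity; by Lemma~\ref{n2} none lies on the fixed-point set $\p\Omega$, so they split into mirror-symmetric pairs and $\Omega$ carries precisely $2n-4$ of them. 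For $n=2$ this forces $\varphi'$ to be zero-free in $\Omega$, whence $\Omega$ is foliated entirely by regular closed trajectories and is a single ring domain of connectivity $2$ (the circle-domain case $n=1$ being the degenerate simply connected alternative). For $n\ge 3$ one has at least $2n-4\ge 2$ interior zeros, whose critical trajectories separate non-homotopic families of closed trajectories; $\Omega$ then splits into several ring domains and fails to be a single maximal domain, which is exactly the configuration excluded in the extremal regime $\lambda=\lambda_m$ of Theorem~\ref{th1}.

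I expect the main obstacle to be the trajectory-theoretic step underlying the reduction to ring domains: one must exclude recurrent and ``density'' trajectory components, which Jenkins' theorem permits in general on surfaces of genus $g\ge 1$. Here I would use that each $\Gamma_k$ is itself a \emph{closed} trajectory—so the trajectories entering $R_k$ close up rather than wind densely—together with the global positivity $\oint_{\Gamma_k}(1+\alpha\lambda\kappa)\,ds>0$ of Theorem~\ref{function}, to rule out the non-closed classes and confirm that the boundary ring domains genuinely fill $\Omega$ up to the critical graph. Pinning down that this filling leaves no residual interior region bounded solely by a nontrivial critical graph is the delicate point that the zero count above is designed to settle.
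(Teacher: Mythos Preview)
Your approach diverges substantially from the paper's, and the divergence opens a real gap.

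The paper's argument is short and, crucially, does \emph{not} split on $n$. It observes that by Lemma~\ref{n2} no zero of $\varphi'$ lies on $\p\Omega$, and by analyticity $\varphi'$ has no poles in $\Omega$; hence every arc of the Stokes graph $\Sigma^+$ in $\overline\Omega$ is closed (an open arc would have to terminate either at a boundary zero or at a degree-one vertex, i.e.\ a pole). From this the paper concludes that every horizontal trajectory in $\overline\Omega$ closes up, so $\Omega$ itself is a maximal domain in Jenkins' sense; Theorem~1 of \cite{Jenkins} then forces connectivity $\le 2$. The logical direction is: hypotheses $\Rightarrow$ $\Omega$ maximal $\Rightarrow$ $n\le 2$.

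Your route via the Schwarz double and the degree formula $\deg K_{\hat\Omega}^{\otimes 2}=4(n-1)-4$ is attractive and the count $2n-4$ interior zeros is correct, but the logic does not close. For $n\ge 3$ you deduce that $\varphi'$ has interior zeros and that $\Omega$ decomposes into several ring domains, hence is \emph{not} a single maximal domain. That is fine as far as it goes, but it contradicts nothing in the hypotheses: the Riccati system \eqref{riccati}--\eqref{start2} and Theorem~\ref{th1} do not assert a priori that $\Omega$ is a maximal domain, and invoking ``the configuration excluded in the extremal regime $\lambda=\lambda_m$'' is circular, since the present theorem is precisely the step that establishes what the extremal regime looks like. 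In effect you have proved the Jenkins implication ``maximal $\Rightarrow$ $n\le 2$'' (via its contrapositive) but not the part that actually carries content here, namely that the extremal hypotheses make $\Omega$ maximal.

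To repair this within your framework you would need an independent argument that the critical graph is empty in $\Omega$ (equivalently, that $\varphi'$ is zero-free there) \emph{before} you know $n$; your doubling count only yields this once $n=2$ is already in hand. The paper sidesteps the issue by arguing directly about arcs of $\Sigma^+$: since they can neither end on $\p\Omega$ nor at a pole, they cannot be open, and this is what forces the maximal-domain conclusion uniformly in $n$. Your worries about recurrent or density trajectories are also handled automatically by that route, whereas in yours they remain a loose end.
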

\begin{proof}
From Lemma~\ref{n2} and the general properties of $\Sigma^{+}$ listed above, it follows that there are no open arcs of $\Sigma^{+}$ in $\overline{\Omega}$. Otherwise, there would be arcs ending on $\p \Omega$ (not allowed by Lemma~\ref{n2}), or vertices of degree 1 in $\Sigma^{+}$ (not allowed since they would correspond to poles of $\varphi'$, which was assumed to be analytic in $\Omega$). Therefore, any trajectory (in the sense of \cite{Jenkins}) of the quadratic differential $\varphi'(z) dz^2$ (including $\Sigma^{+}$) 
can be extended to a closed curve in $\p \Omega$, so $\Omega$ is a {\emph{maximal domain}}. Then from \cite[Theorem 1]{Jenkins}, the connectivity of $\Omega$ cannot exceed 2. 
\end{proof}

Theorem~\ref{th2} then follows for the case $n=2$; since its proof is rather technical and requires a number of additional results, it is presented in full in Appendix 1, under the equivalent formulation Theorem~\ref{final}. Here we use the explicit solution to illustrate the global monodromy condition \eqref{mon}: 
$$
v_1(z) = \left ( \frac{z}{R_1} \right )^{\frac{R_1}{R_1-R_2}}, \quad 
v_2(z) = \left ( \frac{z}{R_2} \right )^{-\frac{R_2}{R_1-R_2}}, \quad 
\varphi(z) = \frac{1}{\pi}\cdot \frac{A}{z}, 
$$
so that $v_1(z)$ can be extended away from $\Omega$ for $z \to 0$, and $v_2(z)$ can be extended away from $\Omega$ for $z \to \infty$, and \eqref{mon} is satisfied:
$$
\sum_{k=1}^n \oint_{\Gamma_k} d Arg(v_k) = 2 \pi. 
$$
\section{Deformation quantization for the generic case}

Under the conditions considered in this work, the (von Neumann) deformation quantization is equivalent to finding a Hilbert space $H$ and a $\star$- algebra of bounded operators on $H$ (a subset of the algebra of bounded operators $B(H) \subset L(H)$, where $L(H)$ denotes the space of linear operators on $H$), closed in the weak operator topology, containing the identity, and equipped with a well-defined trace functional. For the resulting von Neumann algebra to be non-commutative, the space $H$ must be at least 2-dimensional (as considered throughout this paper). 

Endowed with this structure, the operator (sometimes called Bergman operator) of multiplication by the holomorphic coordinate $z$, $T_z \in L(H)$, and its adjoint $T^{\dagger}_z$, have a non-zero commutator $[T_z, T^{\dagger}_z] = D$, where $D$ is a positive, trace-class operator (i.e. the operator $T_z$ is hyponormal). In the simplest case, the Heisenberg canonical commutation relations are imposed so that on a subspace of $H$ (where the restriction of $T_z$ is purely hyponormal), 
$$
[T^{\dagger}_z, T_z] = \lambda \mathbb{I}.
$$

Taking for an obvious example the simple harmonic oscillator, i.e. the case $\Omega = \mathbb{D}, \Gamma = S^1$, a simple calculation leads to the geometric quantization solution 
$$
\lambda = 1, \quad \varphi(z) = 0, \quad \Psi(z) = (z, 1), \quad U(s) = {\rm{diag}} (1, \tau(s)). 
$$ 
Clearly, the adjoint operator to $T_z^{\dagger} = \nabla_{\varphi = 0}$ is given by $T_z = z \mathbb{I}$, and the cyclicity of the vacuum $\Psi(z)$ leads to the Hilbert space which is the closure, in $L^2(S^1)$ norm, of the monomials $T_z^{n}\Psi(z)$, i.e. the orientation-preserving modes $e^{in\phi}, \phi \in S^1, n \in \mathbb{N}$. 

\subsubsection{Bargmann-Fock space}

In special situations (see, e.g. the review in \cite{Fedosov}), deformation quantization reduces to the case known as Weyl quantization, where $H$ is an $L^2$-space. A familiar realization corresponds to  the case where $\Omega = \mathbb{C}$, and the Hilbert space is the Segal-Bargmann space (or Bargmann-Fock space) of holomorphic functions with the  inner product 
$$
\langle f | g \rangle \equiv \frac{1}{\pi}\int_{\mathbb{C}} \overline{f(z)} g(z) e^{-\frac{|z|^2}{\lambda}} dxdy,
$$
and corresponding norm. In this case, the canonical commutation relations correspond to the fact that $T^{\dagger}_z = -\lambda \frac{\p }{\p z}$ (where the integrals are defined in distributional sense). The quantization parameter $\lambda$ is ``free", in the sense that any positive value $\lambda > 0$ is allowed by the theory, and moreover the limit $\lambda \to 0$ is well-defined, and leads to the expected ``classical" theory, in which quantum states normally described by smooth distributions (with exponential decay at $\infty$) degenerate into Dirac point masses, i.e. ``classical" point particles. In this case, $\lambda$ is simply a scale parameter characterizing the ``spread" of wave functions (or rather of their associated density functions), and can be taken to be an arbitrary number $\lambda \in \mathbb{R}_+$.  

\subsubsection{Generalized Weyl quantization}

In the generic case of bounded domain with multiple boundary components,  the convenient simplicity of the Bargmann space is not available any longer, and instead determination of the adjoint operator $T^{\dagger}_z$ requires knowledge of  complicated boundary terms, depending on the specific space of distributions used in the construction. The following example illustrates this. 

Let $H$ be a Hilbert space (possibly finite-dimensional, e.g. $\ell^2(\mathbb{C})$). We introduce the following functional spaces (after \cite{MP}): 
\begin{definition} \la{de}
With $dxdy$ representing the usual  Lebesgue area measure, 
\begin{itemize}
\item[(i)] 
$$
L^2(\Omega, H) \equiv \left \{f: \Omega \to H \Big | \int_{\Omega} f dxdy < \infty \right \},
$$  
\item[(ii)] 
$$
W^m_{\bar \p} (\Omega, H) \equiv  \left \{f \in L^2(\Omega, H) | (\bar \p)^k f \in L^2(\Omega, H), k \le m \right \},
$$
\item[(iii)] 
$$
A(\Omega, H) \equiv \{ f \in L^2(\Omega, H)| \bar \p f = 0\}, 
$$
\end{itemize}
where the anti-holomorphic derivative is understood in the sense of distributions. 
\end{definition}

Then it is known that the space $W^m_{\bar \p} (\Omega, H)$ is a Hilbert space with respect to the norm 
$$
||f||^2_{W^m_{\bar \p}} \equiv \sum_{k=0}^m ||\bar{\p}^k f ||^2_{L^2(\Omega)},
$$
and the spaces defined at (i), (iii) become, when $H = \mathbb{C}$, the usual spaces of square-summable and analytic functions, respectively. In this case, an estimate for the best approximation for $\bar z$ can be found in terms of the $L^2-$norm on $\Omega$, namely: 
$$
\inf_{\varphi \in A(\Omega, H)} ||\bar z - \varphi(z) ||_{L^2(\Omega)} \le  C_{\Omega} ||T_z||_{L^2(\Omega)}, 
$$
with $C_{\Omega}$ a constant determined entirely by the geometric data of $\Omega$. This follows directly from Proposition 2.2. in \cite{MP}, by setting $f = \bar z$. However, this approach does not yield lower-bound estimates for the quantization parameter.  

\subsection{Lower bounds for quantization parameter in the general case}

The topological bound Theorem~\ref{th1} is a consequence of a more general result which will be presented in full elsewhere, and is stated here without proof: 

\begin{claim}
Geometric quantization and algebraic deformation quantization on compact spaces are equivalent, for any choice of functional model.
\end{claim}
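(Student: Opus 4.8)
The plan is to build an explicit dictionary between the two constructions at the level of their generating operators, and then to show that the single scalar invariant distinguishing either theory, the deformation parameter $\lambda$, is computed identically on both sides and is insensitive to the functional realization. On a compact boundary component $\Gamma_k \cong S^1$ (and, by the analyticity condition~\eqref{2}, throughout $\overline{\Omega}$) the geometric side is generated by the covariant derivative $\nabla_\varphi$, the cyclic vacuum $\Psi_k$, and the boundary unitaries $U_k$ of condition~\eqref{3}; the algebraic side is generated by the hyponormal Bergman operator $T_z$ together with its trace functional, with associated tracial GNS Hilbert space. The central dictionary entry is the identification $T_z^{\dagger} = \nabla_\varphi$ and $T_z = z\,\mathbb{I}$, already verified in the harmonic-oscillator and Bargmann-Fock examples, under which the curvature computation $\mathcal{R}$ recorded above collapses the self-commutator to $[T_z^{\dagger}, T_z] = \lambda\,\mathbb{I}$ on the purely hyponormal subspace. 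Cyclicity of $\Psi_k$ then realizes the geometric Hilbert space as the $L^2(S^1)$-closure of the orbit $\{T_z^n \Psi_k\}$, to be identified with the GNS space.

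The second step is to match the two central extensions. The geometric theory carries the Gelfand-Fuchs cocycle $\omega$, while Radulescu's Berezin construction supplies a Hochschild $2$-cocycle on the von Neumann algebra; the plan is to show these represent the same cohomology class by routing both through the Schwarzian correspondence. Concretely, the adjoint action on the dual of $\mathrm{Diff}(S^1)$ produces the operators $\mathcal{L}_f = \tfrac{d^2}{d\theta^2} + \tfrac12\mathcal{S}(f)$, whose attached quadratic differential is precisely the $\varphi'(z)\,dz^2$ of Theorem~\ref{function}. Since both cocycles are pulled back from this common Teichm\"uller datum, their classes, and hence the scalar $\lambda$ labeling the extension, coincide. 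Compactness enters decisively here: it guarantees that $D = [T_z, T_z^{\dagger}]$ is positive and trace-class, so the trace functional is finite and the cocycle integrals converge.

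To obtain independence from the functional model, I would argue that every object entering the identification, namely the trace, the self-commutator $D$, and the cocycle class, is an intrinsic invariant of the pair $(T_z, T_z^{\dagger})$ rather than of the ambient space. Any two admissible models ($L^2(\Omega, H)$, $W^m_{\bar\partial}$, Bargmann-Fock, and so on) thus furnish tracial GNS representations of the \emph{same} traced $\star$-algebra and are unitarily equivalent by uniqueness of GNS; the intertwining unitary carries the geometric connection data across by construction.

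The hard part will be lifting this infinitesimal (Lie-algebra cocycle) matching to a genuine equivalence of the full deformed algebras. The Gelfand-Fuchs/Hochschild comparison is transparent at the tangent level, but concluding equivalence of the global star products requires an integration argument, for instance a Bordemann-Meinrenken-Schlichenmaier-type asymptotic expansion of Toeplitz products, or an explicit deformation retraction between the two products. Showing that this integration is uniform over all admissible functional models, and in particular that no model injects an anomalous boundary contribution to the trace that would shift $\lambda$, is the principal obstacle, and is presumably where the analysis deferred to the companion work is concentrated.
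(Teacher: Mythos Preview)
The paper does not prove this claim. It is explicitly introduced as ``a more general result which will be presented in full elsewhere, and is stated here without proof.'' There is therefore no proof in the paper against which to compare your proposal.

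As to the proposal itself: it is a reasonable outline of the structures one would expect to match --- the identification $T_z^{\dagger}\leftrightarrow\nabla_\varphi$, the Gelfand--Fuchs versus Hochschild $2$-cocycles, and the Schwarzian/quadratic-differential bridge to Teichm\"uller data --- and these are indeed the ingredients the paper gestures at in the surrounding discussion. But what you have written is a research plan, not a proof. You yourself identify the substantive step as unfinished: lifting the infinitesimal cocycle matching to a global equivalence of the deformed algebras, uniformly in the functional model. That step \emph{is} the content of the claim. In particular, your appeal to ``uniqueness of GNS'' to obtain model-independence presupposes that all admissible models carry the same trace on the same $\star$-algebra, which is exactly the point at issue and which you concede could fail through an ``anomalous boundary contribution.'' So the proposal correctly locates the difficulty but does not resolve it --- which, to be fair, is also the status of the claim in the paper.
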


In essence, the main reason behind topological bounds for the possible values of the quantization parameter, in the sense discussed here, follows from the combination of restrictions imposed by the compatibility of simultaneous quantization on the boundary components $\Gamma_k$; consider, for instance, the case in which each curve $\Gamma_k$ corresponds to a Riemann surface $\Sigma_k$, and fundamental group $\pi_k$. The problem then becomes  equivalent to (due to the covariance of quadratic differentials) finding a non-trivial element in the intersection of the Teichm\"uler spaces of $\Sigma_k/\pi_k$, and, by the embedding of the universal  Teichm\"uler  space into the space of bounded holomorphic functions on $\mathbb{D}$, with $C^{\infty}$ norm, to the computation of the $C^{\infty}$ estimate 
$$
\inf_{\varphi \in X} ||\bar z - \varphi(z)|| \ge \inf_{\varphi \in A(\Omega)} ||\bar z - \varphi(z)||, 
$$
where $X$ is strictly included in $A(\Omega)$. Finally, if the functional model for the quantization is a distribution space of the type discussed in Definition~\eqref{de}, then the estimate for $\inf_{\varphi \in A(\Omega, H)} ||\bar z - \varphi(z) ||_{L^2(\Omega)}$ is greater than the minimal value $\lambda_m$ by usual norm inequalities. 

\section*{Appendix 1: The doubly-connected case: annular domains} \la{DP}

\begin{lemma} \la{l1}
Let $\Omega$ be a doubly-connected extremal domain with analytic boundary $\Gamma = \Gamma_1 \cup \Gamma_2.$ If $\varphi$ is the best analytic approximation to $\bar{z}$ in the supremum norm,     
\be \la{cm}
        \varphi'(z) = C [(\log h(z))']^2,
    \ee
with $h$ the conformal map from $\Omega$ to an annular domain $R_2 < |w| < R_1$, $C = $const.
\end{lemma}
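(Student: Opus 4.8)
The plan is to transport the quadratic differential to the annulus via the conformal map $h$ and to show that it becomes a constant multiple of $dw^2/w^2$. Since $\Omega$ is doubly connected, the classical uniformization of ring domains provides a conformal map $h:\Omega\to A=\{R_2<|w|<R_1\}$, unique up to a rotation $w\mapsto e^{i\theta}w$ and the inversion $w\mapsto R_1R_2/w$; neither of these alters $[(\log h)']^2$, so the identity to be proved is independent of the particular choice of $h$. Writing $w=h(z)$ and using the tensorial transformation law for quadratic differentials, I set $\tilde Q(w)\,dw^2=\varphi'(z)\,dz^2$, i.e. $\tilde Q(w)=\varphi'(z(w))/h'(z(w))^2$, which is holomorphic on the open annulus $A$.

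Next I would record the two facts that make the annular coordinate decisive. First, by Theorem~\ref{function} and Lemma~\ref{n2} the differential $\varphi'(z)\,dz^2=(1+\alpha\lambda\kappa)\,ds^2$ is real and strictly positive along each boundary component; since reality of a quadratic differential along a curve is a conformal invariant, $\tilde Q(w)\,dw^2$ is real along \emph{both} bounding circles $|w|=R_1$ and $|w|=R_2$. Second, because $\Gamma=\Gamma_1\cup\Gamma_2$ is real-analytic, Schwarz reflection across each component lets $\varphi'$ --- and hence $\tilde Q$ --- be continued holomorphically to a slightly larger annulus $R_2-\delta<|w|<R_1+\delta$, so the Laurent expansion of $\tilde Q$ converges up to and across both circles. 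Parametrizing $|w|=R$ by $w=Re^{i\theta}$ gives $dw^2=-w^2\,d\theta^2$, so the reality of $\tilde Q\,dw^2$ on $|w|=R$ is exactly the reality of $G(w):=w^2\tilde Q(w)$ on that circle.

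The heart of the argument is then the elementary sub-lemma: a function $G$ holomorphic in an annulus and real on two concentric circles of distinct radii is a real constant. Expanding $G=\sum_m a_m w^m$, reality on $|w|=R$ forces $a_{-m}=\bar a_m R^{2m}$ for every $m$; imposing this for $R=R_1$ and $R=R_2$ yields $\bar a_m(R_1^{2m}-R_2^{2m})=0$, whence $a_m=0$ for all $m\neq 0$ because $R_1\neq R_2$, while $a_0$ is real. Applying this to $G=w^2\tilde Q$ gives $w^2\tilde Q(w)\equiv C\in\mathbb{R}$, that is $\tilde Q(w)=C/w^2$ (with $C<0$ fixed by the positivity on the boundary). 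Pulling back through $w=h(z)$,
\[
\varphi'(z)\,dz^2=\frac{C}{h(z)^2}\,h'(z)^2\,dz^2=C\left(\frac{h'(z)}{h(z)}\right)^2 dz^2=C\,[(\log h(z))']^2\,dz^2,
\]
which is the claimed formula \eqref{cm}.

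I expect the only real obstacle to be the justification of the holomorphic continuation across the boundary: the Laurent coefficient comparison is rigorous only once $\tilde Q$ is known to be holomorphic on an annulus strictly containing both circles, which is where the hypothesis that the $\Gamma_k$ are real-analytic (together with Schwarz reflection) is used. It is worth stressing that double connectivity enters in an essential way precisely at the coefficient comparison: having \emph{two} circles of distinct radii is exactly what forces $R_1^{2m}\neq R_2^{2m}$ for $m\neq0$ and thereby annihilates every Laurent mode except $w^{-2}$; a single boundary circle would leave all modes free, consistent with the absence of any length scale in the simply-connected (Poincar\'e disk) case.
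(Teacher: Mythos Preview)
Your proof is correct and takes a genuinely different, more self-contained route than the paper's. The paper's argument is a one-line citation: it invokes Jenkins' structure theorem for trajectories of positive quadratic differentials (\cite{Jenkins}, Theorem~1) together with the covariance of quadratic differentials under conformal maps; Jenkins' theorem classifies the maximal ring domains swept out by closed trajectories, and on any such ring domain the differential pulls back to a constant multiple of $dw^2/w^2$ on the model annulus. You bypass this classification entirely: transporting $\varphi'(z)\,dz^2$ to the annulus, you observe that $G(w)=w^2\tilde Q(w)$ is holomorphic with real boundary values on \emph{both} circles, and the elementary Fourier/Laurent identity $\bar a_m(R_1^{2m}-R_2^{2m})=0$ annihilates every mode $m\neq 0$. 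This is more hands-on and makes the role of double connectivity explicit---two distinct radii are precisely what force the rigidity---while the paper's appeal to Jenkins situates the lemma inside the trajectory theory already used in Section~3. Your observation that positivity on $\partial\Omega$ forces $C<0$ is a small bonus not recorded in the statement, and your care about Schwarz reflection (to ensure the Laurent expansion is valid up to and across both circles) is exactly where the real-analytic boundary hypothesis earns its keep.
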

\begin{proof}
When the connectivity $n=2$, a direct application of \cite[Theorem 1]{Jenkins} and covariance of quadratic differentials gives the desired result.
\end{proof}

\begin{lemma} \la{l2}
The diffeomorphism  $\mu: \Gamma_2 \to \Gamma_1$, defined through
$$
\mu(z) = h^{-1}\left (\frac{R_1}{R_2}h(z) \right ),
$$
is a M\"obius transformation.
\end{lemma}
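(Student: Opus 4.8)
The plan is to prove that $\mu$ is a M\"obius transformation by showing that its Schwarzian derivative $\mathcal{S}(\mu)$ vanishes identically along $\Gamma_2$. Since $\mu=h^{-1}\circ M\circ h$, with $M(w)=\frac{R_1}{R_2}w$, is a composition of the conformal maps $h$, the dilation $M$, and $h^{-1}$, it extends holomorphically off $\Gamma_2$; because $\Gamma_2$ is real-analytic, $\mathcal{S}(\mu)\equiv 0$ on an arc forces $\mathcal{S}(\mu)\equiv 0$ on the whole domain of definition, and a map with vanishing Schwarzian is the restriction of a M\"obius transformation.

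First I would apply the composition (cocycle) identity for the Schwarzian. Because $M$ is a dilation, $\mathcal{S}(M)=0$, so the chain rule collapses to
$$
\mathcal{S}(\mu)(z) = \mathcal{S}(h)(z) - \left(\frac{R_1}{R_2}\right)^2\frac{\big(h'(z)\big)^2}{\big(h'(\mu(z))\big)^2}\,\mathcal{S}(h)\big(\mu(z)\big),
$$
expressing $\mathcal{S}(\mu)$ entirely through the Schwarzian of the uniformizing map $h$ evaluated at the two matched boundary points $z\in\Gamma_2$ and $\mu(z)\in\Gamma_1$. Vanishing of $\mathcal{S}(\mu)$ is therefore equivalent to the single transformation law $\mathcal{S}(h)(z) = \left(\frac{R_1}{R_2}\right)^2\big(h'(z)/h'(\mu(z))\big)^2\,\mathcal{S}(h)(\mu(z))$, and the whole problem is reduced to producing this identity.

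The heart of the argument is to extract exactly this law from the quantization data, which so far has entered only through Lemma~\ref{l1}. Here I would combine three ingredients: the boundary equations \eqref{start1}--\eqref{start2}, which give $S_k=\varphi+\lambda^2(\log v_k)'$ on $\Gamma_k$ with $v_k$ solving \eqref{ODE} and $\varphi'=C[(\log h)']^2$; the relations $|h(z)|=R_k$, i.e. $\overline{h(z)}=R_k^2/h(z)$, on each $\Gamma_k$; and the resulting intertwining $\mu\circ R_2=R_1\circ\mu$ of the Schwarz reflections $R_k(z)=\overline{S_k(z)}$ (which is immediate from $M\rho_2=\rho_1 M$ for the circle inversions $\rho_k$). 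Differentiating the functional equation that matches $h$ across the two boundaries along $\mu$ then relates $(\log h)'$, and hence $\mathcal{S}(h)$, at $z$ and at $\mu(z)$. I expect the $\varphi'$-terms, which are symmetric in the two boundaries by Lemma~\ref{l1}, to cancel and leave precisely the required transformation law.

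The main obstacle is exactly this cancellation. What must be shown is that the conformal distortion of $h$ --- encoded by $\mathcal{S}(h)$, which is \emph{not} controlled by $\varphi'$ alone --- enters the two matched boundary equations in a manner compatible with the dilation $M$, so that no residual non-M\"obius term survives. Equivalently, in the global picture one must verify that the extension of $\mu$ by successive Schwarz reflections in the images of $\Gamma_1,\Gamma_2$ under the group $\langle R_1,R_2\rangle$ is single-valued and fills $\hat{\mathbb C}$ minus the two-point limit set of that group, so that $\mu$ continues to a conformal automorphism of the sphere. Either route reduces the lemma to verifying that the boundary conditions, not merely the conformal type of $\Omega$, force this rigidity; the computation itself is routine once the functional equation has been set up.
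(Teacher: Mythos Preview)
Your reduction via the cocycle identity is correct: writing $\mu=h^{-1}\circ M\circ h$ with $M$ M\"obius gives
\[
\mathcal{S}(\mu)(z_2)\;=\;\mathcal{S}(h)(z_2)\;-\;\Big(\tfrac{R_1}{R_2}\Big)^{2}\,\frac{h'(z_2)^2}{h'(z_1)^2}\,\mathcal{S}(h)(z_1),\qquad z_1=\mu(z_2),
\]
so the problem is exactly the $\mu$-invariance of $\mathcal{S}(h)\,dz^2$. But this is where your proposal stalls, and you say so yourself: the ``main obstacle'' is the cancellation, and nothing in your outline (boundary equations, Schwarz reflections, matching of $(\log h)'$) actually produces control of $\mathcal{S}(h)$. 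The quantities you list---$S_k$, $v_k$, $\varphi'$, $|h|=R_k$---all feed into $\varphi'=C(h'/h)^2$ already recorded in Lemma~\ref{l1}; none of them gives you a second, independent relation for $\mathcal{S}(h)$. So as written the argument is incomplete at the decisive step.

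The paper's proof avoids $\mathcal{S}(h)$ entirely and is much shorter. From $h(\mu(z_2))=\tfrac{R_1}{R_2}h(z_2)$ and Lemma~\ref{l1} one gets directly
\[
\mu'(z_2)=\frac{(h'/h)(z_2)}{(h'/h)(z_1)}=\sqrt{\frac{\varphi'(z_2)}{\varphi'(z_1)}},
\qquad\text{i.e.}\qquad \varphi'(z_1)\,dz_1^2=\varphi'(z_2)\,dz_2^2 .
\]
The second input---which your sketch never invokes---is Theorem~\ref{function}: $\varphi'=\tfrac{\lambda^2}{2}\,\mathcal{S}(v_i/v_j)$ is itself a Schwarzian, hence transforms under composition with $\mu$ as a projective connection,
\[
\varphi'(z_2)\,dz_2^2=\varphi'(z_1)\,dz_1^2+\tfrac{\lambda^2}{2}\,\mathcal{S}(\mu)\,dz_2^2 .
\]
Comparing the two displays forces $\mathcal{S}(\mu)\equiv 0$. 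In other words, $\varphi'\,dz^2$ is being viewed in two ways: as an honest quadratic differential (pullback-invariant under $\mu$ by Lemma~\ref{l1}) and as a projective connection (picking up the additive $\mathcal{S}(\mu)$ by Theorem~\ref{function}); the discrepancy between the two transformation laws is precisely $\mathcal{S}(\mu)$. Your route through $\mathcal{S}(h)$ is in principle equivalent, but without bringing in the fact that $\varphi'$ is a Schwarzian you have no leverage to close it.
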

\begin{proof}
Clearly, $\mu$ is a diffeomorphism by composition law. 
By definition, for any $z_2 \in \Gamma_2, z_1 = \mu(z_2) \in \Gamma_1$,
\be
\frac{h(z_1)}{h(z_2)} = \frac{R_1}{R_2}, \quad (h \circ \mu)(z_2) = \frac{R_1}{R_2}h(z_2),
\ee
so the chain rule and Lemma~\ref{l1} give
\be \la{rr}
h'(z_1)\cdot \mu'(z_2) = \frac{R_1}{R_2}h'(z_2) \Rightarrow \mu'(z_2) = \frac{h(z_1)}{h(z_2)}  \cdot \frac{h'(z_2)}{h'(z_1)} = \sqrt{\frac{\varphi'(z_2)}{\varphi'(z_1)}}
\ee
Therefore,
\be \la{invar}
\frac{dz_1}{dz_2} = \sqrt{\frac{\varphi'(z_2)}{\varphi'(z_1)}} \Rightarrow \varphi'(z_2)dz_2^2 = \varphi'(z_1)dz_1^2.
\ee
Since $\varphi'(z)$ is a quadratic differential (c.f. Theorem~\ref{function}, as a Schwarzian), it transforms under composition with the map $\mu(z)$ as
\be
\varphi'(z_2)dz_2^2 = \varphi'(z_1)dz_1^2 + \mathcal{S}(\mu(z_1))dz_1^2,
\ee
with $\mathcal{S}(\mu(z))$ the Schwarzian of the map $\mu(z)$. Thus, Eq.~\ref{invar} gives $\mathcal{S}(\mu) \equiv 0$, so
$\mu(z)$ is a M\"obius transformation.
\end{proof}

\begin{lemma} \la{l3}
Under the mapping $\mu(z)$, the curvatures $\kappa_j$ of $\Gamma_j$ at $z_j$ ($j = 1, 2$), $z_1 = \mu(z_2)$, are related via
\be \la{id}
(1- \lambda \kappa_1) ds_1^2 
= (1+\lambda \kappa_2) ds_2^2,
\ee
and
\be \la{id2}
\quad \frac{d \kappa_1}{ds_2} = \frac{d\kappa_2}{d s_1}.
\ee
\end{lemma}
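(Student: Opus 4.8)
The plan is to treat the two identities separately: \eqref{id} is essentially a restriction of the invariance already in hand, whereas \eqref{id2} carries the real content and I would reduce it to a single scalar constancy statement before invoking the M\"obius property.

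First I would obtain \eqref{id} by restricting the $\mu$-invariance of the quadratic differential to $\p \Omega$. By Theorem~\ref{function}, along $\Gamma_1$ (where $\alpha=-1$) one has $\varphi'(z)\,dz^2=(1-\lambda\kappa_1)\,ds_1^2$, and along $\Gamma_2$ (where $\alpha=+1$) one has $\varphi'(z)\,dz^2=(1+\lambda\kappa_2)\,ds_2^2$. Evaluating the identity $\varphi'(z_2)\,dz_2^2=\varphi'(z_1)\,dz_1^2$ of Lemma~\ref{l2} (Eq.~\eqref{invar}) at a pair $z_1=\mu(z_2)$ and substituting the two boundary expressions yields $(1-\lambda\kappa_1)\,ds_1^2=(1+\lambda\kappa_2)\,ds_2^2$ at once.

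For \eqref{id2} I would choose a common parameter for the two curves. Since $\varphi'=C[(\log h)']^2$ by Lemma~\ref{l1}, the angle $\theta=\arg h$ parametrizes both components simultaneously, and $z_1=\mu(z_2)$ corresponds to equal values of $\theta$. Writing $\rho_1=1-\lambda\kappa_1=|\varphi'(z_1)|$ and $\rho_2=1+\lambda\kappa_2=|\varphi'(z_2)|$ (the moduli appear because $\varphi'(z)\,dz^2$ is real and positive on $\p\Omega$ while $|\tau_k|=1$), Theorem~\ref{function} gives $ds_j=\sqrt{c/\rho_j}\,d\theta$ with $c=|C|$. Applying the chain rule to $\frac{d\kappa_1}{ds_2}=\frac{d\kappa_2}{ds_1}$ and cancelling the common factors shows that \eqref{id2} is \emph{equivalent} to $\frac{d}{d\theta}\bigl(\sqrt{\rho_1}+\sqrt{\rho_2}\bigr)=0$, i.e. to the statement that $\sqrt{1-\lambda\kappa_1}+\sqrt{1+\lambda\kappa_2}$ is constant along corresponding points of the two boundary components. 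I would emphasize that this reduction uses only \eqref{id} and the geometric data; the invariance alone does not close it.

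The remaining, genuinely nontrivial, step is this constancy, and here I would use the full strength of Lemma~\ref{l2}, namely that $\mu$ is M\"obius. Passing to the flat coordinate $\xi=\sqrt{C}\,\log h$ (so $\varphi'\,dz^2=d\xi^2$), the two components become the horizontal lines $\mathrm{Im}\,\xi=\mathrm{const}$, the map $\mu$ becomes the imaginary translation $\xi\mapsto\xi+i\beta$ with $\beta=\sqrt{c}\,\log(R_1/R_2)$, and the uniformizer $z=\Psi(\xi)$ obeys $\mu(\Psi(\xi))=\Psi(\xi+i\beta)$. Because $\mathcal{S}(\mu)\equiv 0$, the Schwarzian cocycle forces $\mathcal{S}(\Psi)(\xi+i\beta)=\mathcal{S}(\Psi)(\xi)$; combined with the $2\pi\sqrt{c}$-periodicity inherited from $\arg h$ and boundedness up to the analytic boundary, a Liouville-type argument makes $\mathcal{S}(\Psi)$ constant, whence $\Psi$ sends the horizontal lines to concentric circles and $\sqrt{\rho_j}=1/|\Psi'|$ is constant on each line, giving the required constant sum. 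The main obstacle is precisely this final step: verifying that $\mathcal{S}(\Psi)$ extends without poles and is bounded so that the Liouville argument applies, and keeping the bookkeeping straight so that the derived rigidity is recognized as consistent with (indeed, as anticipating) the eventual conclusion that the extremal domain is a round annulus.
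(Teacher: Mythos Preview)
Your treatment of \eqref{id} is correct and is exactly what the paper does: combine the boundary form of Theorem~\ref{function} with the invariance \eqref{invar}.

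For \eqref{id2} you take a very different route from the paper. Your reduction of \eqref{id2} to the constancy of $\sqrt{1-\lambda\kappa_1}+\sqrt{1+\lambda\kappa_2}$ is correct (writing $a=\sqrt{\rho_1}$, $b=\sqrt{\rho_2}$, the equation $\sqrt{\rho_2}\,\dot\rho_1+\sqrt{\rho_1}\,\dot\rho_2=0$ becomes $2ab(\dot a+\dot b)=0$). But the way you then try to establish that constancy is, in effect, a direct proof that $\Omega$ is a round annulus: you pass to the flat strip, show $\mathcal S(\Psi)$ is doubly periodic, and invoke Liouville to force $\Psi$ to map horizontal lines to concentric circles. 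If this works it subsumes Theorem~\ref{final} and renders Lemma~\ref{l3} a corollary; but it inverts the paper's logic (Lemma~\ref{l3} is a tool \emph{for} Theorem~\ref{final}), and, as you yourself note, the Liouville step is not free---the relation $\mu(\Psi(\xi))=\Psi(\xi+i\beta)$ is initially available only on one boundary line, so the analytic extension of $\Psi$ across the full lattice and the boundedness of $\mathcal S(\Psi)$ on a fundamental domain need real work.

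The paper's argument avoids all of this with a short Schwarz-function computation. Since $\mu$ is M\"obius, the Schwarz functions are related by $S_1=\mu^{\sharp}\circ S_2\circ\mu^{-1}$; taking Schwarzians and using invariance under left M\"obius composition and covariance under right composition gives $\mathcal S(S_1)=\mathcal S(S_2)\,[(\mu^{-1})']^2$. The paper then uses the identity, valid on any smooth curve, $\mathcal S(S(z))=i\,\bar\tau^2\,\dfrac{d\kappa}{ds}$, which turns that relation into $\dfrac{d\kappa_1}{ds_1}=\dfrac{d\kappa_2}{ds_2}\Bigl(\dfrac{ds_2}{ds_1}\Bigr)^{2}$, i.e.\ \eqref{id2}. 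This is a few lines and uses only that $\mathcal S(\mu)=0$; no global uniformization or Liouville argument is needed. If you want to keep your strip picture, note that this same identity says $\mathcal S(S_1)\,dz_1^2=\mathcal S(S_2)\,dz_2^2$ under $\mu$, which is the differential form of \eqref{id2} and follows instantly from $\mathcal S(\mu)=0$ without ever proving the domain is annular.
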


\begin{proof}
Eq.~\ref{id} follows from Eq.~\ref{l0} and Eq.~\ref{invar}. Now using the general formula valid for any curve diffeomorphism
$$
S_1(z) = \mu^{\sharp}(S_2(\mu^{-1}(z))),
$$
and the fact that $\mu^{-1}, \mu^{\sharp}$ are also M\"obius transformations, we conclude that the Schwarzians of the Schwarz functions $S_{1,2}$ are related via
$$
\mathcal{S}(S_1) = \mathcal{S}(\mu^{\sharp}\circ S_2 \circ \mu^{-1}) = \mathcal{S}(S_2)[(\mu^{-1})']^2.
$$
since Schwarzians are invariant under left composition and covariant under right composition with M\"obius transformations.

Evaluating the Schwarzian of the Schwarz function of any smooth curve gives the general expression
$$
\mathcal{S}(S(z)) = i \bar \tau^2 \frac{d \kappa}{ds},
$$
so we arrive at the identity
\be \la{eq}
\bar{\tau}^2(z_1) \frac{d\kappa}{ds}(z_1) = \bar{\tau}^2(z_2) \frac{d\kappa}{ds}(z_2)\left ( \frac{dz_2}{dz_1}\right )^2, \quad (\forall) z_2 \in \Gamma_2,
\ee
or
\be
\frac{d\kappa_1}{ds_1} = \frac{d\kappa_2}{ds_2} \left ( \frac{ds_2}{ds_1} \right )^2,
\ee
which implies Eq.~\ref{id2}.
\end{proof}

\begin{thm} \la{final}
Under the conditions of Lemmas~\ref{l1}-\ref{l3}, the curves $\Gamma_{1,2}$ are concentric circles.
\end{thm}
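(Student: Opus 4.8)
The plan is to reduce the whole statement to the boundary behaviour of a single harmonic function attached to the uniformization $h$ of Lemma~\ref{l1}, to extract a clean first integral from the curvature identities \eqref{id}--\eqref{id2}, and then to run a rigidity (mode–elimination) argument that forces both boundary curves to have constant curvature. Set $g = h^{-1}$, so that $g$ maps the round annulus $\{R_2 < |w| < R_1\}$ conformally onto $\Omega$. By Lemma~\ref{l1} the quadratic differential becomes $\varphi'(z)\,dz^2 = C\,(dw/w)^2$, and Theorem~\ref{function} together with Lemma~\ref{n2} fixes the sign $C<0$ (positivity on $\partial\Omega$). Since the M\"obius map $\mu$ of Lemma~\ref{l2} acts in the $w$-plane as the dilation $w\mapsto (R_1/R_2)w$, corresponding points $z_1=\mu(z_2)$ share the same angular coordinate $\theta$, so I would parametrize $\Gamma_1$ and $\Gamma_2$ simultaneously by $\theta$.

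Introducing the analytic, non-vanishing function $G(w)=w\,g'(w)$ (nonzero by univalence of $g$) and the harmonic function $U=\log|G|$, a short computation from \eqref{l0} encodes all the geometric data in the two boundary traces of $U$: one finds $1-\lambda\kappa_1 = |C|\,e^{-2U}\big|_{R_1}$, $1+\lambda\kappa_2 = |C|\,e^{-2U}\big|_{R_2}$, together with $ds_j = e^{U}\big|_{R_j}\,d\theta$, while the curvature is carried by the normal derivative, $\kappa_j = \epsilon_j\,(\rho\,\partial_\rho U)\,e^{-U}\big|_{R_j}$, with $\epsilon_j=\pm1$ the orientation sign of the (outer vs.\ inner) component. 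The key computational step is then to turn the Lemma~\ref{l3} identities into a conservation law. Writing $a=1-\lambda\kappa_1$, $b=1+\lambda\kappa_2$ and using the quadratic-differential arclength $t$ defined by $dt^2 = a\,ds_1^2 = b\,ds_2^2$, Eq.~\eqref{id} yields $ds_1=dt/\sqrt a$, $ds_2=dt/\sqrt b$, and Eq.~\eqref{id2} reduces to $-\,a_t/\sqrt a = b_t/\sqrt b$, i.e.
\be
\frac{d}{dt}\left(\sqrt{1-\lambda\kappa_1} + \sqrt{1+\lambda\kappa_2}\right) = 0 .
\ee
Hence $\sqrt a+\sqrt b$ is constant along $\partial\Omega$, which in the notation above reads $e^{-U}\big|_{R_1}+e^{-U}\big|_{R_2}=\mathrm{const}$ for all $\theta$. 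Combined with harmonicity of $U$ and the two curvature relations, this produces an overdetermined system for $U$.

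The hard part, which I expect to be the main obstacle, is the rigidity step: proving that this overdetermined system forces every non-constant angular Fourier mode of $U$ to vanish, so that $U=U(\rho)$ is radial. The difficulty is genuine, because harmonicity alone leaves the two boundary traces independent, and the conservation law couples them only weakly; the additional rigidity must come from feeding the curvature--normal-derivative relations back into the Fourier expansion on the annulus and eliminating the modes one by one, equivalently from exploiting that the curvature-transformation law attached to $\mu$ (whose Schwarzian vanishes, Lemma~\ref{l2}) is compatible with the traces of a single analytic $G$. I anticipate a maximum-principle or convexity argument here, which must also be reconciled with the branch/winding of $\log g'$ on the multiply-connected annulus.

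Once $U$ is shown to be radial, $\log G = \beta\log w + \gamma$ for constants $\beta,\gamma$; single-valuedness of $G$ on the annulus forces $\beta\in\mathbb{Z}$, and univalence of $g$ together with boundedness of $\Omega$ then forces $\beta=1$, i.e.\ $g'$ is constant and $g(w)=aw+b$ is affine. Consequently $\Gamma_1,\Gamma_2$ are the images under $z\mapsto aw+b$ of the concentric circles $|w|=R_1,\,|w|=R_2$, hence themselves concentric circles, which is the assertion. As a consistency check tying back to Theorem~\ref{th2}, the resulting annulus $\Omega_m$ has $A=\pi(R_1^2-R_2^2)$ and $P=2\pi(R_1+R_2)$, so $\lambda_m = 2A/P = R_1-R_2$, as required.
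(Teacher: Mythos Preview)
Your conservation law $\sqrt{1-\lambda\kappa_1}+\sqrt{1+\lambda\kappa_2}=\mathrm{const}$ is correctly derived from \eqref{id}--\eqref{id2}, and the translation into the $w$-annulus via $G=wg'$ and $U=\log|G|$ is clean. But the proposal has a genuine gap exactly where you flag it: the rigidity step is not proved, only anticipated. Harmonicity of $U$ together with the single scalar relation $e^{-U}\big|_{R_1}+e^{-U}\big|_{R_2}=\mathrm{const}$ is far from forcing radiality (one equation links two a priori independent boundary traces), and the nonlinear Robin conditions $1\mp\lambda\kappa_j=|C|e^{-2U}$ on each circle are the defining equations of the problem, not new constraints. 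You say the overdetermination ``must come from feeding the curvature--normal-derivative relations back into the Fourier expansion'', but you do not show how the mode elimination closes, nor why a maximum principle would apply to this nonlinear system. As written, the argument stops before the main difficulty.

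The paper's proof takes a different and complete route. Instead of uniformizing to the $w$-annulus, it uses the explicit M\"obius form of $\mu$ from Lemma~\ref{l2}, writing $z_1=a+b/(z_2-c)$ and introducing polar coordinates $r_je^{i\phi_j}$ centered at $a$ and $c$. The M\"obius relation gives $r_1r_2=|b|$ and $ds_1/ds_2=r_1/r_2$ exactly; combined with \eqref{id}--\eqref{id2} this yields a \emph{different} first integral, namely $(1-\lambda\kappa_1)=K\,r_2/r_1$ (equivalently $\sqrt{(1-\lambda\kappa_1)(1+\lambda\kappa_2)}=K$). Two boundary integrations force $K=1$, after which the polar curvature formula collapses to the ODE $\frac{d}{d\phi_1}\!\left(r_1^{-1}\,dr_1/d\phi_1\right)=0$, whose only closed-curve solutions are circles. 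So the paper replaces your missing rigidity step by a direct ODE reduction, made possible by the special shape a M\"obius map takes in shifted polar coordinates. If you wish to rescue your approach, note that your integral $\sqrt a+\sqrt b=\mathrm{const}$ together with the paper's $\sqrt{ab}=1$ would immediately give $a$ and $b$ separately constant; but obtaining that second integral in your $w$-coordinates appears to require essentially the same M\"obius computation.
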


\begin{proof}
From Lemma~\ref{l2}, there are 3 complex numbers $a, b, c$ such that
\be \la{mob}
z_1 = a + \frac{b}{z_2-c}.
\ee
Introducing the notation $r_1e^{i \phi_1} = (z_1 -a), r_2e^{i \phi_2} = (z_2 -c)$, we obtain
\be \la{eqa}
r_1\cdot r_2 = |b|, \quad \phi_1 + \phi_2 = \mbox{const.}
\ee
By differentiation of Eq.~\ref{mob}, we have
\be \la{area1}
\frac{dz_1}{dz_2} = -\frac{z_1-a}{z_2-c}, \quad
\frac{ds_1}{ds_2} = \frac{|z_1-a|}{|z_2-c|} = \frac{r_1}{r_2}. 
\ee
Differentiating Eq.~\ref{eqa},
\be \la{ew1}
\frac{1}{r_1}\frac{dr_1}{d\phi_1} = \frac{1}{r_2}\frac{dr_2}{d\phi_2}, \quad
\frac{1}{r_1}\frac{d^2r_1}{d\phi_1^2} - \left ( \frac{1}{r_1}\frac{dr_1}{d\phi_1} \right)^2
= - \left [ \frac{1}{r_2}\frac{d^2r_2}{d\phi_2^2} - \left ( \frac{1}{r_2}\frac{dr_2}{d\phi_2} \right)^2 \right ].
\ee
Since for any planar curve $r(\phi)$ the extrinsic curvature can be expressed as
\be \la{ew2}
\kappa = \frac{1}{r}\cdot \frac{1 + \frac{1}{r}\frac{d^2r}{d\phi^2} - 2\left ( \frac{1}{r}\frac{dr}{d\phi} \right)^2}
{\left [ 1 + \left ( \frac{1}{r}\frac{dr}{d\phi} \right)^2 \right ]^{3/2}} =
\frac{1}{r}\cdot \frac{1 - \left ( \frac{1}{r}\frac{dr}{d\phi} \right)^2}
{\left [ 1 + \left ( \frac{1}{r}\frac{dr}{d\phi} \right)^2 \right ]^{3/2}} + \frac{1}{r}\cdot \frac{\frac{1}{r}\frac{d^2r}{d\phi^2} - \left ( \frac{1}{r}\frac{dr}{d\phi} \right)^2}
{\left [ 1 + \left ( \frac{1}{r}\frac{dr}{d\phi} \right)^2 \right ]^{3/2}},
\ee
we obtain from Eqs.~\ref{ew1}-\ref{ew2} that
\be \la{dif}
\kappa_1r_1 - \kappa_2r_2 = 2\cdot \frac{\frac{1}{r_1}\frac{d^2r}{d\phi_1^2} - \left ( \frac{1}{r_1}\frac{dr_1}{d\phi_1} \right)^2}
{\left [ 1 + \left ( \frac{1}{r_1}\frac{dr_1}{d\phi_1} \right)^2 \right ]^{3/2}} =
2 \cdot \frac{\frac{d}{d\phi_1} \left ( \frac{1}{r_1}\frac{dr_1}{d\phi_1}\right )}
{\left [ 1 + \left ( \frac{1}{r_1}\frac{dr_1}{d\phi_1} \right)^2 \right ]^{3/2}}
\ee
Now Eq.~\ref{id} and Eq.~\ref{area1} give
\be
1-\lambda \kappa_1 = (1 + \lambda \kappa_2)\left( \frac{r_2}{r_1} \right)^2,
\ee
so by differentiating with respect to $s_2$, we arrive using Eq.~\ref{area1} at
\be
-\lambda \frac{d \kappa_1}{ds_2} = \left [ \frac{d}{ds_2}\left ( \frac{r_2}{r_1} \right )^2 \right ] (1+\lambda \kappa_2) + \lambda \frac{d\kappa_2}{ds_2}\cdot \left( \frac{d s_2}{d s_1} \right)^2,
\ee
so from Eq.~\ref{id} and elementary manipulations,
\be
-2\lambda \frac{d \kappa_1}{ds_2} = \left [ \frac{d}{ds_2}\log \left ( \frac{r_2}{r_1} \right )^2 \right ]\cdot (1+\lambda \kappa_2) \left ( \frac{r_2}{r_1} \right )^2
\ee
Applying Eq.~\ref{id2} again, we finally arrive at
\be
\frac{d}{ds_2} \log(1-\lambda \kappa_1) =  \frac{d}{ds_2}\log \left ( \frac{r_2}{r_1} \right ),
\ee
or
\be \la{rw}
1-\lambda \kappa_1 = K\cdot \frac{r_2}{r_1} = K \frac{ds_2}{ds_1} = (1+\lambda \kappa_2)  \left (\frac{ds_2}{ds_1}\right)^2,
\quad K = \mbox{ const.}
\ee
Multiplying Eq.~\ref{rw} by $ds_1$ and integrating over $\Gamma_1$, we get
\be \la{q1}
\oint_{\Gamma_1}(1-\lambda \kappa_1) ds_1 = K\oint_{\Gamma_2} ds_2 \Rightarrow 2\pi \lambda = L_1 - K\cdot L_2.
\ee
Likewise, Eq.~\ref{rw} gives $K ds_1 = (1+\lambda \kappa_2) ds_2$, so integrating over $\Gamma_2$, we have
\be \la{q2}
K\cdot L_1 = L_2 + 2\pi \lambda.
\ee
Consistency of Eqs.~\ref{q1}, \ref{q2} imposes
$$
(L_1 + L_2)(K-1) = 0 \Rightarrow K=1.
$$
Thus, Eq.~\ref{rw} becomes
\be \la{oh}
(1-\lambda \kappa_1)r_1 = r_2, \quad (1+\lambda \kappa_2) r_2 = r_1.
\ee
Adding the two identities in \eqref{oh}, we obtain
\be \la{po}
\kappa_2 r_2 - \kappa_1 r_1 = 0.
\ee
Finally, Eq.~\ref{po} and Eq.~\ref{dif} lead to
\be
\frac{d}{d\phi_1} \left ( \frac{1}{r_1}\frac{dr_1}{d\phi_1}\right ) = 0 \Rightarrow r_1(\phi_1) = A\cdot e^{\alpha \phi_1},
\ee
with $A, \alpha$ constants. Therefore, $\Gamma_1$ is either a circle ($\alpha = 0$), or a spiral ($\alpha \ne 0$), and being a closed curve, it can only be a circle of radius $r_1 = R_1$. From Eqs.~\ref{start1}-\ref{start2} and the Schwarz function for a circle, it follows that $\Gamma_2$ is also a circle concentric  with $\Gamma_1$, of radius $r_2 = R_2$, and $\lambda = R_1 - R_2$.
\end{proof}

\section{Acknowledgments}

A portion of this work was developed as part of a different project, concerning the classical  extremal problem of best approximations in the space of analytic functions. The author is grateful to Ar. Abanov, C. Beneteau, B. Gustavsson, D. Khavinson, M. Putinar, and A. Vasiliev for useful discussions and feed-back, especially regarding the calculations contained in Section 3 and the Appendix. 


 \bibliographystyle{amsplain}

\end{document}